\DeclareMathAlphabet{\mathpzc}{OT1}{pzc}{m}{it}
\numberwithin{equation}{section}
\theoremstyle{definition}
\newtheorem{Theorem}{Theorem}[]
\newtheorem{Proposition}[Theorem]{Proposition}
\newtheorem{Corollary}[Theorem]{Corollary}
\theoremstyle{definition}
\newtheorem{Example}[Theorem]{Examples}
\newtheorem{Note}[Theorem]{Note}
\DeclareMathAlphabet{\mathpzc}{OT1}{pzc}{m}{it}
\def\s[#1,#2]{[#1\stackrel{{\displaystyle\star}}{,}#2]}
\newcommand{\eq}{\begin{equation}}
\newcommand{\eqa}{\begin{eqnarray}}
\newcommand{\en}{\end{equation}}
\newcommand{\ena}{\end{eqnarray}}
\newcommand{\enn}{\nonumber \end{equation}}
\def\sk{\vskip .4cm}
\def\noi{\noindent}
\def\epsi{{\varepsilon}}
\def\st {\star}
\def\D/h{\widehat{\fmslash D}}
\def\al{\alpha}
\def\la{\lambda}
\def\be{\beta}
\def\5bar{{\overline 5}}
\def\FF{\mathcal F}
\def\s'O{\stackrel{_{{\displaystyle\st \footnotesize '}}}{_{^{^{\displaystyle\otimes}}}}}
\def\D{\Delta}
\def\1s{{1_\st }}
\def\3s{{3_\st }}
\def\2s{{2_\st }}
\def\ef1{{1_\FF}}
\def\ef2{{3_\FF}}
\def\ef3{{2_\FF}}
\def\hbar{\lambda}
\def\cc{\mathbb{C}}
\newcommand{\arxiv}[1]{{\tt
\href{http://www.arXiv.org/abs/#1}{arXiv:#1}}}
\def\Tbarp{{\bar T}{^{\mbox{${\;\!}^+$}}\!}}
\def\Tbarm{{\bar T}{^{\mbox{${\;\!}^-$}}\!}}
\newcommand{\del}{\partial}
\newcommand{\pa}{\partial}
\newcommand{\eqn}[1]{(\ref{#1})}
\newcommand{\nn}{\nonumber}
\newcommand{\N}{{\cal N}}
\newcommand{\cN}{{\cal N}}
\newcommand{\M}{{\cal M}}
\newcommand{\cM}{{\cal M}}
\newcommand{\A}{{\cal A}}
\newcommand{\matc}{\begin{array}{c}}
\newcommand{\matcc}{\begin{array}{cc}}
\newcommand{\matccc}{\begin{array}{ccc}}
\newcommand{\matcccc}{\begin{array}{cccc}}
\newcommand{\emat}{\end{array}}
\newcommand{\uuu}{u}
\def\bar{\overline}
\newcommand{\HF}{{{\,\,F^{}_{}}^{\!\:\!\!\!\!\!\!\!\!\!{{{\ast}}}~~}}}
\newcommand{\HG}{{{{\,\,G^{}_{}}}^{_{\:}\!\!\!\!\!\!\!\!\!\!\!{{{\ast}}}~~}}}
\newcommand{\HT}{{{_{\,}\,\,T^{}_{}}^{\!\!\;\!\!\!\!\!\!\!\!\!{{{\ast}}}~~}}}
\newcommand{\HbT}{{~\,{\bar T}^{\!\!\!\!\!\!\!\!\!\ast\,~~}}}
\begin{document}

% hep--th/yymmnnn\\

%\baselineskip=14pt
%\baselineskip=24pt
\begin{titlepage}

{\vspace{-2.8em}}

\hfill {CERN-PH-TH/2013-004}
\sk\sk

\begin{center}
\sk
\sk
{\bf \large Constitutive relations and Schr\"odinger's formulation of \\[.4em]nonlinear
   electromagnetic theories}

\sk\sk
{\bf Paolo Aschieri$^{1,2}$ and Sergio Ferrara$^{3,4}$}
\sk

{\it $^1$Dipartimento di Scienze e Tecnologie
 Avanzate, Universit\`{a} del
 Piemonte Orientale,}\\  {\it $^2$INFN, Sezione di Torino, gruppo collegato di Alessandria }\\
{\it Viale T. Michel 11, 15121 Alessandria, Italy}\\
{\small{\texttt{aschieri@to.infn.it}}}\\[.5em]

        {\it $^3$Physics Department,Theory Unit, CERN, 
        CH 1211, Geneva 23, Switzerland}\\
        {\it $^4$INFN - Laboratori Nazionali di Frascati, 
        Via Enrico Fermi 40,I-00044 Frascati, Italy}\\
{\small\texttt{sergio.ferrara@cern.ch}}\\[.5em]
\sk\sk\sk

\sk\sk
\begin{abstract} 
We present a systematic study of nonlinear and higher derivatives
extensions of electromagnetism.
We clarify when action functionals $S[F]$ can be explicitly obtained from
arbitrary  (not necessarily self-dual) nonlinear equations of motion.
We show that the  ``Deformed twisted self-duality condition''
proposal originated in the context of supergravity counterterms is
actually the general framework needed to discuss self-dual theories
starting from a variational principle. 

We generalize to nonlinear and higher derivatives
theories  Schr\"odinger formulation of Born-Infeld theory, and for the
latter, and more in general for nonlinear theories,  we derive a closed form
expression of the corresponding deformed twisted self-duality
conditions.  
%The general relation between  Schr\"odinger and deformed twisted duality conditions
%is given by a  hypergeometric function satisfying a hidden quartic
%equation.
This implies that the hypergeometric expression  entering these duality conditions and
leading to  Born-Infeld theory satisfies a hidden quartic equation.
\end{abstract}
\sk
\setcounter{page}{0}
\end{center}
%\small{keywords: Nonlinear electromagnetism, duality rotations, self-duality,
%Born-Infeld theory}
\end{titlepage}

\renewcommand{\thepage}{\arabic{page}}

\def\thefootnote{\arabic{footnote}} \setcounter{footnote}{0}

%\tableofcontents

\section{Introduction}
Duality is a leading paradigm of theoretical physics. Electric-magnetic duality is
one of the oldest and most studied examples. Maxwell
theory is self-dual, i.e., admits duality symmetry under rotation of
the electric field into the magnetic one. 
Schr\"odinger \cite{Schrodinger} was the first to show that the
nonlinear theory of electromagnetism of Born and Infeld, quite
remarkably has the same $U(1)$ duality symmetry property.
The study of electric-magnetic duality symmetry  has found
further motivations since its appearance in extended supergravity
theories \cite{FSZ77,csf,crju}.  In \cite{csf} the first example of
a noncompact duality rotation group was considered, it arises in $N=4$
supergravity and is due to scalar fields transforming under duality
rotations.
These results triggered
further investigations in the general structure of  self-dual
theories. In particular the symplectic formalism for nonlinear
electromagnetism coupled to scalar and
fermion fields was initiated in \cite{GZ}, there  the
duality  groups were shown to be subgroups of noncompact symplectic
groups (the compact case being recovered in the absence of scalar
fields).  A nonlinear example is  Born-Infeld electrodynamics
coupled to axion and dilaton fields \cite{Gibbons:1995ap}.
Another relevant aspect \cite{BG} is that the spontaneous breaking of $N=2$ rigid
supersymmetry to $N=1$ can lead to a Goldstone vector multiplet whose
action is the supersymmetric and self-dual Born-Infeld action 
\cite{DP, CF}.
Higher supersymmetric Born-Infeld type actions are also self-dual and related to spontaneous
supersymmetry breakings in field theory \cite{KET, KT, KT2, BIK} and in string
theory \cite{KET2, RT}.

\sk

Duality symmetry is a powerful tool to investigate the
structure of possible counterterms in extended supergravity. 
After the explicit computations that showed the 3-loop UV finiteness of 
$N=8$ supergravity \cite{Bern}, an explanation based on  
$E_{7(7)}$ duality symmetry was provided 
\cite{Brodel:2009hu, Elvang:2010kc, Beisert:2010jx, Bossard:2010bd}.
Furthermore
duality symmetry arguments have also been used to suggest
all loop finiteness of $N=8$ supergravity \cite{Kallosh:2011dp}.
Related to these developments, 
in \cite{BN} a proposal on how to implement 
duality rotation invariant counterterms in a corrected action $S[F]$ leading to a
self-dual theory was put forward under the name of ``deformed 
 twisted self-duality conditions'' (see eq. (\ref{Iconst})).
Examples included
counterterms dependent on derivatives
of the field strength. The proposal (renamed ``nonlinear 
twisted self-duality conditions'') was further elaborated in
\cite{CKR} and \cite{CKO}; see also \cite{BCFKR}, and 
\cite{Kuzenko, Kuzenko:2013gr}, for the supersymmetric extensions and examples.
The proposal is equivalent to a formulation of self-dual theories using auxiliary fields studied in \cite{IZ2001} and \cite{Ivanov:2003uj} in case of nonlinear electromagnetism without higher derivatives of the
field strength. 
This coincidence has been brought to light in a very recent paper \cite{IZ}.

The supergravity motivated studies have provided new examples of
self-dual theories and have touched upon basic issues
 like consistency and equivalence of different formulations of self-duality
conditions, reconstruction of the action from these conditions and of
duality invariant expressions. This paper is a  systematic study of
these issues. 
\sk
A nonlinear and higher derivative
electromagnetic theory is determined by defining, eventually
implicitly, the relation between the electric field strength $F$
(given by the electric field $E$ and the magnetic induction
$B$ ) and the  magnetic field strength $G$ (given by the magnetic field $H$ 
and the electric displacement $D$). 
We call {\sl constitutive relations} the relations defining $G$ in
terms of $F$ or vice versa. 

We begin Section  \ref{dualityrot}  by proving that (locally) the equations
of motions of an arbitrary, not necessarily self-dual, nonlinear
electromagnetic theory  satisfying an integrability condition can 
always be obtained from a variational principle via an action  $S[F]$
that is explicitly computed (reconstructed) from the constitutive
relations.
This reconstruction procedure works also for theories
with higher derivatives if we further assume that they 
can be obtained from an action principle.

We then study  the general theory
of $U(1)$ duality rotations.
Self-duality of the equations of motion constrains the constitutive
relations. The  deformed twisted self-duality
conditions are just constitutive relations obtained from a variational
procedure. In these  deformed twisted self-duality
conditions the dependence of $G$ from $F$ is given implicitly, but the
constraint that leads to self-dual theories is easily implemented.
This is due to the use of the complex and chiral variables $T^+$,
$T^-$,  $\overline{T^+}$,  $\overline{T^-}$ that are the chiral
projections of the variables  $T=F-iG$  and $\overline
T=F+iG$ introduced by Schr\"odinger \cite{Schrodinger, GZS}.
The fields  $T^+$, $T^-$,  $\overline{T^+}$,  $\overline{T^-}$ have definite
electric-magnetic duality charge and chirality: $(T^+,+1,+1),
~(T^-,+1,-1), ~(\overline{T^+},-1,-1),  ~(\overline{T^-},-1,+1)$. 

The action $S[F]$ can
always be reconstructed from the action $\cal I[T^-,\overline{T^-}]$ that determines the 
deformed twisted self-duality conditions, and vice versa. Indeed,
as also shown in \cite{Ivanov:2003uj}, the
two actions are related by a Legendre transformation. This shows that
the 
deformed twisted self-duality conditions are the 
general framework needed to discuss self-dual theories obtained from a
variational principle.

Section \ref{constitutiverelations} is devoted to a detailed study of
the constitutive relations of the kind 
${\HG}{}_{\mu\nu}={\cN_2}_{\,} F_{\mu\nu} +{\cN_1}_{\,}
\HF{}_{\!\mu\nu}$ where ${\cN_1}$ and ${\cN_2}$ are real
(pseudo)scalar functions of $F$, $G$ and their derivatives.
These are not the most general constitutive relations because 
${\cN_1}$ and ${\cN_2}$ are not differential operators and do not
act on the $\mu\nu$-indices of  $F_{\mu\nu}$
and its Hodge dual $\HF{}_{\!\mu\nu}$. However they describe a wide class of
nonlinear theories. For example theories without higher
derivatives are determined by this kind of
relations.\footnote{Indeed in this case  the elementary
  antisymmetric 2-tensors in the theory are  only $F_{\mu\nu}$ and its Hodge dual $\HF_{\mu\nu}$, hence any
antisymmetric 2-tensor will be a linear combination (with
coefficients
dependent on the field strength)
of  $F_{\mu\nu}$ and $\HF_{\mu\nu}$.}
Equivalent but more duality symmetric formulations of these constitutive relations are
then investigated. In particular we formulate consistent constitutive
relations in terms of the complex variables $T=F-iG$  and $\overline
T=F+iG$, thus generalizing  Schr\"odinger study of  Born-Infeld theory \cite{Schrodinger, GZS}.

In Section \ref{SCH} the constitutive relations of Section \ref{constitutiverelations} 
are constrained to define self-dual theories. 
These self-dual constitutive relations turn out to be very simple. They
are determined for example by expressing the  ratio 
$\frac{T_{\mu\nu}\overline T^{\mu\nu}}{|T_{\mu\nu\,}\HT^{\mu\nu}|}$
in terms of $T,\, \overline T$ and their derivatives.
In particular we see that self-duality constraints the phases of 
$T_{\mu\nu}\HT^{\mu\nu}$ and $T_{\mu\nu}T^{\mu\nu}$ to differ by a $-\pi/2$ angle and
the square of their moduli to differ by $|T_{\mu\nu}\overline T^{\mu\nu}|^2$.

Section \ref{nhd} considers self-dual theories that do not involve higher
derivatives of the field strength. In this case the natural
independent variable is $|T_{\mu\nu}T^{\mu\nu}|$. We present a
closed form expression of the {deformed twisted self-duality
  conditions} that determine  Born-Infeld theory.
Comparison of this expression with the one in terms of a
hypergeometric function ${\mathfrak F}$  previously considered in
\cite{CKR} leads to a hidden quartic equation for ${\mathfrak F}$.
This quartic equation is not just a feature of Born-Infeld theory. It also enters  the explicit relation we obtain
between deformed twisted self-duality conditions of any nonlinear
theory and the corresponding  constitutive relations in
the Schr\"odinger's variables $T$,$\overline T$.

In the appendices we provide examples of self-dual theories with higher
derivatives, a basic result on the energy momentum tensor
 of nonlinear theories and details on a technical calculation.

\section{U(1) duality rotations in nonlinear and higher derivatives
  electromagnetism \label{dualityrot}}
\subsection{Action functionals from equations of motion\label{AFEOM}}

Nonlinear and higher derivatives electromagnetism is described by the equations of motion
\eqa
&&{\pa}_{\mu}
{\widetilde F}^{\mu\nu}  =0~,\label{max22}\\
&&{\pa}_{\mu}
\widetilde{G}^{\mu\nu}=0~, \label{max11}\\
&&
\widetilde G^{\mu\nu}=h^{\mu\nu}[F,\la]
\label{maxwww}~.
\ena 
The first two simply state that the 2-forms $F$ and $G$ are closed, ${{d}} F={{d}} G=0$, indeed
$\widetilde
F^{\mu\nu}\equiv\frac{1}{2}\epsi^{\mu\nu\rho\sigma}F_{\rho\sigma}$,
$\widetilde
G^{\mu\nu}\equiv\frac{1}{2}\epsi^{\mu\nu\rho\sigma}G_{\rho\sigma}$
(with $\epsi^{0123}=1$). The last set 
$\widetilde G^{\mu\nu}=h^{\mu\nu}[F,\la]$, where $\la$ is the
dimensionful parameter typically present in a nonlinear theory\footnote{Nonlinear and higher derivatives theories of electromagnetism  admit
one (or more) dimensionful coupling constant(s) $\la$. Since the 
expansions for weak and slowly varying fields are expansions in  
adimensional variables (like for example $\la F F$ and $\la F
\HF$, or, schematically
and using a
different coupling constant, $\la\partial F\partial F$) 
we will equivalently say that these expansions are in power 
series of the coupling constant(s) $\la$.
}, 
are the constitutive relations.  They specify the dynamics and
determine the magnetic field strength $G$ as a functional in term of the electric field strength $F$, and, vice versa, determine $F$ in term
of $G$, indeed $F$ and $G$ should be treated on equal footing in (\ref{max22})-(\ref{maxwww}).
The square bracket notation  $h^{\mu\nu}[F,\la]$ stems from
the possible dependence of $h^{\mu\nu} $ from derivatives of $F$.

Since in general we consider  curved background metrics
$g_{\mu\nu}$, it is convenient to introduce the $\ast$-Hodge operator;
on an arbitrary antisymmetric tensor $F_{\mu\nu}$ it is defined by 
\eq
\HF{}_{\mu\nu}=
\frac{1}{2\sqrt{g}}
g_{\mu\al}g_{\nu\beta}\,\epsi^{\al\be\rho\sigma}F_{\rho\sigma}
=\frac{1}{\sqrt{g}}\widetilde F_{\mu\nu}~,
\en
where $g=-\det(g_{\mu\nu})$, and it squares to
minus the identity.
The constitutive relations (\ref{maxwww}) implicitly include also
a dependence on the background metric $g_{\mu\nu}$ and for example in
case of usual electromagnetism they read $G_{\mu\nu}=\HF_{\mu\nu}=\frac{1}{\sqrt{g}}\widetilde
F_{\mu\nu}$, while for 
Born-Infeld theory, 
\eq
{S}_{BI}= \frac{1}{\la}\int\!d^4x\,\sqrt{g}\Big( 1-\sqrt{1+\frac{1}{2}\la
  F^2-\frac{1}{16}\la^2(F\HF)^2}\;\Big)~,\label{BILag}
\en
where $F^2=FF=F_{\mu\nu}F^{\mu\nu}$ and 
$F\HF=F_{\mu\nu}\HF^{\mu\nu}$,
they read
\eq\label{BIcr}
{G}_{\mu\nu}=
\frac{\HF{}_{\!\mu\nu}+{1\over 4} \la (F{\HF})\,F_{\mu\nu}}{
\sqrt{1+{1\over 2}\la F^2-\frac{1}{16}\la^2(F{\HF})^2}}~.
\en
The constitutive relations (\ref{maxwww}) define a nonlinear and higher
derivative extension of electromagnetism because we require that setting $\la=0$ in
(\ref{maxwww})
we recover usual electromagnetism: $G_{\mu\nu}=\HF{}_{\!\mu\nu}$.
\sk
We now show that in the general nonlinear case (where the constitutive relations do not
involve derivatives of $F$) the equations of motion  (\ref{max22})-(\ref{maxwww})
can always be obtained from a variational principle provided they
satisfy the integrability conditions 
%\eq\label{intcond}
%\frac{\partial
% \widetilde{G}^{\mu\nu}}{\partial F_{\rho\sigma}}=\frac{\partial
% \widetilde{G}^{\rho\sigma}}{\partial F_{\mu\nu}}}~.
%\en
\eq\label{intcond}
\frac{\partial
{h}^{\mu\nu}}{\partial F_{\rho\sigma}}=\frac{\partial
{h}^{\rho\sigma}}{\partial F_{\mu\nu}}~.
\en
These conditions are necessary in order to obtain (\ref{maxwww}) from
an action $S[F]=\int \!d^4x \/{\cal L}(F)$. Indeed if\/\footnote{The factor 2 is due to the 
convention $\frac{\partial{F_{\rho\sigma}}}{\partial
  F_{\mu\nu}}=\delta^\mu_\rho\delta^\nu_\sigma\,$ adopted in \cite{GZ}
and in the review \cite{AFZ}. It will be used
throughout the paper.\label{funo}} $h^{\mu\nu}=2\frac{\partial
{\cal L}}{\partial F_{\mu\nu}}$ then (\ref{intcond}) trivially holds.

In order to show that (\ref{intcond}) is also sufficient we recall that the field
strength $F_{\mu\nu}(x)$ locally is a map from spacetime to
$\mathbb{R}^6$ (with coordinates $F_{\mu\nu}$, {\small{$\mu<\nu$}}). We assume 
$h^{\mu\nu}(F,\la)$ to be  well defined functions
on $\mathbb{R}^6$ or more in general on an open submanifold $M\subset
\mathbb{R}^6$ that includes the origin ($F_{\mu\nu}=0$) and that is a star shaped
region w.r.t. the origin (e.g. a 6-dimensional ball or cube
centered in the origin).

Then condition (\ref{intcond}) states that the 1-form
$\mathpzc{h}=h^{\mu\nu}dF_{\mu\nu}$ is closed, and hence, by Poincar\'e lemma,
exact on $M$; we write $\mathpzc{h}=d{\cal L}$. We have ${\cal L}(F)-{\cal L}(0)=\int_\gamma {}_{\!}\mathpzc{h}\,$ for
any curve $\gamma(c)$ of coordinates $\gamma_{\mu\nu}(c)$ such that
$\gamma_{\mu\nu}(0)=0$ and $\gamma_{\mu\nu}(1)=F_{\mu\nu}$. In
particular, choosing the straight line from the origin to the point of
coordinates $F_{\mu\nu}$,
and setting $S=\int d^4x \,{\cal L}(F)$, we immediately obtain

\begin{Theorem}\label{actionfromeom}
If the constitutive relations (\ref{maxwww}) do not depend on
derivatives of $F$ (i.e. if $h^{\mu\nu}[F,\la]=h^{\mu\nu}(
F,\la)\,$) and the functions $h^{\mu\nu}(F,\la)$ are defined in a
star shaped region $M$ (of coordinates $F_{\mu\nu}$) and satisfy the integrability conditions (\ref{intcond}),
then the constitutive relations (\ref{maxwww})  are equivalent to the equations\/$^{\ref{funo}}$
 \eq
{\widetilde G}^{\;\mu\nu}= 2\frac{\delta S[F]}{\delta F_{\mu\nu}} ~\label{Sconst0}
\en
where the action functional $S[F]$ is given by
\eq
S[F]=\frac{1}{2}\int \!d^4x_{\,} F_{\mu\nu}\!\!\int_0^1\! dc \, h^{\mu\nu}(cF,\la)~\label{recide}.
%S_\kappa=\frac{1}{4\kappa}\int \!d^4x d\kappa \,F\widetilde G_\kappa~\label{recid}.
\en
\end{Theorem}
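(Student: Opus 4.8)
The plan is to read the integrability condition (\ref{intcond}) as the statement that a $1$-form on the space of field-strength values is closed, apply the Poincar\'e lemma on the star-shaped region $M$ to produce a potential, and then verify by a direct computation that this potential, integrated over spacetime, reproduces the constitutive relations through the variational equation (\ref{Sconst0}). Most of the topological groundwork is already in place before the statement; the content that genuinely needs checking is that the explicit $S[F]$ in (\ref{recide}) really inverts the passage from $\mathcal{L}$ to $h^{\mu\nu}$.

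First I would work on $M\subset\mathbb{R}^6$ with coordinates $F_{\mu\nu}$ ($\mu<\nu$) and consider the $1$-form $\mathpzc{h}=h^{\mu\nu}(F,\la)\,dF_{\mu\nu}$. Condition (\ref{intcond}) is exactly $d\mathpzc{h}=0$. Since $M$ is star-shaped with respect to the origin, the Poincar\'e lemma gives $\mathpzc{h}=d\mathcal{L}$, and the radial homotopy formula evaluates $\mathcal{L}$ along the segment $c\mapsto cF$: one finds $\mathcal{L}(F)-\mathcal{L}(0)=\tfrac12\int_0^1 h^{\mu\nu}(cF,\la)F_{\mu\nu}\,dc$, where the factor $\tfrac12$ records the passage from a sum over independent coordinates $\mu<\nu$ to the Einstein-summed expression, consistently with the convention $\partial F_{\rho\sigma}/\partial F_{\mu\nu}=\delta^\mu_\rho\delta^\nu_\sigma$ of footnote~\ref{funo}. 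Fixing $\mathcal{L}(0)=0$ and setting $S=\int d^4x\,\mathcal{L}(F)$ then yields (\ref{recide}).

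The substantive step is the converse packaging: that this $S[F]$ returns the original $h^{\mu\nu}$ through (\ref{Sconst0}). Because $\mathcal{L}$ depends on $F$ but not on its derivatives, the functional derivative reduces to the ordinary partial derivative, so I would compute $2\,\partial\mathcal{L}/\partial F_{\mu\nu}$ directly from (\ref{recide}). Differentiating produces two terms, one proportional to $\int_0^1 h^{\mu\nu}(cF)\,dc$ and one containing $\partial h^{\alpha\beta}/\partial F_{\mu\nu}$. At this point I would invoke (\ref{intcond}) to rewrite $\partial h^{\alpha\beta}/\partial F_{\mu\nu}=\partial h^{\mu\nu}/\partial F_{\alpha\beta}$, recognize the $c$-integrand as $c\,\tfrac{d}{dc}h^{\mu\nu}(cF)$, and integrate by parts in $c$; the boundary term yields $\tfrac12 h^{\mu\nu}(F)$ and the leftover integral cancels the first term, leaving $2\,\partial\mathcal{L}/\partial F_{\mu\nu}=h^{\mu\nu}(F)=\widetilde G^{\mu\nu}$, which is (\ref{Sconst0}). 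The necessity direction is immediate and already noted: if $h^{\mu\nu}=2\,\partial\mathcal{L}/\partial F_{\mu\nu}$ then (\ref{intcond}) follows from equality of mixed partials.

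The only delicate point, and the place I expect to have to be most careful, is the factor-of-two and antisymmetrization bookkeeping: with the convention treating $F_{\mu\nu}$ and $F_{\nu\mu}$ on the same footing, one must track consistently whether index sums run over all pairs or over $\mu<\nu$, since this is precisely what pins the $\tfrac12$ in (\ref{recide}) against the $2$ in (\ref{Sconst0}) and what makes the integration by parts close. Everything else is the Poincar\'e lemma together with a single integration by parts that hinges on (\ref{intcond}).
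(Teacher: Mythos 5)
Your proposal is correct and takes essentially the same route as the paper: reading (\ref{intcond}) as closedness of the 1-form $\mathpzc{h}=h^{\mu\nu}dF_{\mu\nu}$ on the star-shaped region $M$, invoking the Poincar\'e lemma, and evaluating the potential $\mathcal{L}$ along the radial path to obtain (\ref{recide}). The explicit differentiation of (\ref{recide}) with the integration by parts in $c$ that you carry out is precisely the homotopy-formula computation the paper leaves implicit in its ``we immediately obtain,'' so it adds self-containedness rather than a different argument.
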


\begin{Corollary} On spacetimes where closed two forms are exact
($dF=0\Rightarrow F=dA$),  the
 equations of motion (\ref{max22})-(\ref{maxwww}) of
nonlinear electromagnetism satisfying the conditions of Theorem \ref{actionfromeom}
are equivalent to the equations of motion
\eq
\frac{\delta S}{\delta A_{\mu}}=0 ~\label{Aeom}
\en
where $S=\frac{1}{2}\int \!d^4x_{} \int_0^1\! dc \, F_{\,}
h(cF,\la)$. 
\end{Corollary}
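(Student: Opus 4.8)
The plan is to use the topological hypothesis to trade the Bianchi-type equation (\ref{max22}) for the existence of a gauge potential, and then to verify that the single remaining dynamical equation (\ref{max11}) is exactly the Euler--Lagrange equation obtained by varying $A$. The guiding principle is that Theorem \ref{actionfromeom} has already packaged the constitutive relation (\ref{maxwww}) into the action $S[F]$, so the only genuine dynamical content left to reproduce is (\ref{max11}).

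First I would solve (\ref{max22}) identically. Since $\partial_\mu \widetilde F^{\mu\nu}=0$ is the statement $dF=0$, the hypothesis that closed two-forms are exact lets me set $F=dA$, i.e. $F_{\mu\nu}=\partial_\mu A_\nu-\partial_\nu A_\mu$; conversely any $F$ of this form automatically satisfies (\ref{max22}). Thus on such spacetimes (\ref{max22}) is equivalent to the mere existence of $A$, and I may view the action of Theorem \ref{actionfromeom} as a functional $S[A]:=S[dA]$ of $A$ alone. This is precisely the $S$ displayed in the statement, with $F\,h=F_{\mu\nu}h^{\mu\nu}$ evaluated along $F=dA$.

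Next I would compute the variation of $S[A]$ under $A\mapsto A+\delta A$, using $\delta F_{\mu\nu}=\partial_\mu\delta A_\nu-\partial_\nu\delta A_\mu$. By Theorem \ref{actionfromeom}, and with the factor-of-two convention of footnote~\ref{funo}, one has $2\,\delta S/\delta F_{\mu\nu}=\widetilde G^{\mu\nu}=h^{\mu\nu}$, and since $\widetilde G^{\mu\nu}$ is antisymmetric the two terms of $\delta F_{\mu\nu}$ contribute equally. Integrating by parts and discarding the boundary term, I expect
\[
\frac{\delta S}{\delta A_\nu}=-\,\partial_\mu \widetilde G^{\mu\nu},
\]
where $\widetilde G^{\mu\nu}$ is the field strength determined from $F=dA$ through the constitutive relation (\ref{maxwww}), as encoded in $S$. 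Imposing $\delta S/\delta A_\mu=0$ then gives $\partial_\mu\widetilde G^{\mu\nu}=0$, which is (\ref{max11}). Conversely, any solution $(F,G)$ of (\ref{max22})--(\ref{maxwww}) satisfies $dF=0$, hence $F=dA$, and its equation (\ref{max11}) is exactly $\delta S/\delta A_\mu=0$; this establishes the equivalence in both directions.

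I do not expect a genuine obstacle here: the content is essentially bookkeeping built on Theorem \ref{actionfromeom}. The points requiring care are keeping the factor-of-two convention consistent so that the constitutive relation is truly carried by $S$ and the sole surviving dynamical equation of the $A$-variation is (\ref{max11}), justifying that the boundary term may be dropped, and recording that throughout the equivalence $G$ is understood to be defined from $F$ via $h$.
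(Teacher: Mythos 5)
Your proposal is correct and follows essentially the same route as the paper: the topological hypothesis turns (\ref{max22}) into $F=dA$, Theorem \ref{actionfromeom} carries the constitutive relation (\ref{maxwww}) into $S$, and varying $A$ reproduces (\ref{max11}) as $\partial_\mu\widetilde G^{\mu\nu}=0$. The paper states this in one line; you merely spell out the integration-by-parts computation (with the correct sign and factor-of-two bookkeeping), which is a faithful expansion of the same argument.
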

\begin{proof} Equation (\ref{max22}) is  the
Bianchi identity for $F=dA$, (\ref{maxwww}) holds because of 
Theorem \ref{actionfromeom}, and  (\ref{max11}) is equivalent to the equations of
motion (\ref{Aeom}).
\end{proof}
We have seen that under the integrability conditions (\ref{intcond})
locally  the equations of motion of nonlinear electromagnetism 
(\ref{max22})-(\ref{maxwww})  can be obtained  from the action 
\eq
S=\frac{1}{2}\int \!d^4x_{} \int_0^1\! dc \,c F_{\,} \widetilde G_c~,
\en 
where $\widetilde G_c=\frac{1}{c} h(cF,\la)$.

It is interesting to generalize  these results to the case of nonlinear
and higher derivatives electromagnetism. We here present a first step in this
direction

\begin{Proposition}\label{fstep}
If the equations of motion (\ref{max22})-(\ref{maxwww}) of a nonlinear and higher derivatives
electromagnetic theory are obtained from an action functional $S[F]$
then we have 
\eq
S[F]=\frac{1}{2}\int \!d^4x_{} \int_0^1\! dc \, F_{\,} h[cF,\la]~,
\en
that we simply rewrite  $S=\frac{1}{2}\int \!d^4x_{} \int_0^1\! dc \,c
F_{\,} \widetilde G_{c}$.
\end{Proposition}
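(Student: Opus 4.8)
The plan is to obtain the displayed formula by a scaling (homotopy) argument, the exact functional analog of the straight-line computation $\gamma_{\mu\nu}(c)=cF_{\mu\nu}$ that produced \eqref{recide} in the non-derivative case; the only genuinely new ingredient is the bookkeeping of the derivative dependence through functional derivatives. First I would make precise what ``obtained from an action functional $S[F]$'' means. As in the higher-derivative analog of \eqref{Sconst0}, I take it to mean that the constitutive relations \eqref{maxwww} coincide with
\[
\widetilde G^{\mu\nu}=2\frac{\delta S[F]}{\delta F_{\mu\nu}}~,
\]
so that $h^{\mu\nu}[F,\la]=2\,\delta S[F]/\delta F_{\mu\nu}$, where $\delta/\delta F_{\mu\nu}$ now denotes the full Euler--Lagrange derivative. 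Writing $S[F]=\int d^4x\,{\cal L}(F,\partial F,\dots)$, this reads $h^{\mu\nu}=2\big(\tfrac{\partial{\cal L}}{\partial F_{\mu\nu}}-\partial_\rho\tfrac{\partial{\cal L}}{\partial(\partial_\rho F_{\mu\nu})}+\dots\big)$.

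Next I would introduce the rescaled configuration $cF$ for $c\in[0,1]$ and compute $\tfrac{d}{dc}S[cF]$. Applying the ordinary chain rule to the density ${\cal L}(cF,c\,\partial F,\dots)$ gives $\tfrac{d}{dc}S[cF]=\int d^4x\,\big(\tfrac{\partial{\cal L}}{\partial F_{\mu\nu}}F_{\mu\nu}+\tfrac{\partial{\cal L}}{\partial(\partial_\rho F_{\mu\nu})}\partial_\rho F_{\mu\nu}+\dots\big)$ evaluated at $cF$. Integrating by parts and discarding the resulting total derivatives, the integrand reorganizes into the functional derivative acting on $F_{\mu\nu}$, so that $\tfrac{d}{dc}S[cF]=\int d^4x\,\tfrac{\delta S}{\delta F_{\mu\nu}}\big|_{cF}\,F_{\mu\nu}=\tfrac{1}{2}\int d^4x\,F_{\mu\nu}\,h^{\mu\nu}[cF,\la]$.

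Finally I would integrate this identity in $c$ from $0$ to $1$. Using the normalization $S[0]=0$ (the Lagrangian carries no field-independent constant, consistent with recovering ordinary electromagnetism at $\la=0$), the fundamental theorem of calculus yields $S[F]=\tfrac{1}{2}\int d^4x\int_0^1 dc\,F_{\mu\nu}\,h^{\mu\nu}[cF,\la]$, which is the asserted expression. Recalling $\widetilde G_c=\tfrac{1}{c}h[cF,\la]$, that is $h[cF,\la]=c\,\widetilde G_c$, immediately rewrites it as $S=\tfrac{1}{2}\int d^4x\int_0^1 dc\,c\,F\,\widetilde G_c$.

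I expect the only delicate step to be the integration by parts: in the higher-derivative setting the chain-rule expression and the Euler--Lagrange expression differ precisely by total derivatives, and one must assume these boundary contributions vanish (fields decaying suitably, or the standard variational boundary conditions). This is the derivative-case counterpart of the hypotheses in Theorem~\ref{actionfromeom} that $M$ be star shaped so the straight-line homotopy stays in the domain and that $S[0]$ be a harmless constant; here the homotopy $F\mapsto cF$ plays the same role, while the closedness-plus-Poincar\'e-lemma input is replaced by the bare assumption that an action $S[F]$ exists.
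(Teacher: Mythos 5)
Your proof is correct, and it reaches the formula by a more direct route than the paper's. Both arguments rest on the same core identity, the functional chain rule along the scaling homotopy, $\frac{d}{dc}S[cF]=\int d^4x\,\frac{\delta S}{\delta F_{\mu\nu}}\big|_{cF}F_{\mu\nu}=\frac{1}{2}\int d^4x\,F\,h[cF,\la]$, valid after discarding the total derivatives that separate the chain-rule expression from the Euler--Lagrange derivative; but they integrate it differently. You apply the fundamental theorem of calculus to $c\mapsto S[cF]$ on $[0,1]$ with the explicit normalization $S[0]=0$. The paper instead introduces the family $S_c[F]=c^{-2}S[cF]$, recasts the same identity as the first-order equation $-c\,\partial_c S_c=2S_c-\frac{1}{2}\int d^4x\,F\widetilde G_c$, i.e.\ its eq.~(\ref{Ttrace}), and then identifies $S_c=\frac{1}{2c^2}\int d^4x\int_0^c dc'\,c'F\widetilde G_{c'}$ as ``the primitive with the correct behaviour under rescaling of $c$ and $F$'', setting $c=1$ at the end. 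Your route is cleaner on two points: the normalization $S[0]=0$ (no field-independent term in the Lagrangian) is stated openly rather than buried in the ``correct rescaling behaviour'' clause, which in the paper's argument is also what eliminates the homogeneous solutions $B[F]/c^2$ of the ODE; and no primitive has to be guessed and verified. What the paper's detour buys is eq.~(\ref{Ttrace}) itself, which has independent value: it is the quantity identified in Appendix \ref{TTT} with half the integrated trace of the energy-momentum tensor, and it is invoked again in the discussion of duality invariance of the rescaled actions $S_c$. Both proofs share the boundary-term caveat you correctly flag at the end, and both implicitly use that $S_c$ stays finite as $c\to 0$, which your normalization makes transparent.
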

\begin{proof}
Consider the one parameter family of actions
$S_c[F]=\frac{1}{c^2}S[cF]$. 
Deriving with respect to $c$ we obtain 
\eq
-c\frac{\partial S_c}{\partial c}=2S_c-\int \!d^4x ~F\frac{\delta
S_c[F]}{\delta F}~,\label{Ttrace}
\en 
i.e. $-c\frac{\partial S_c}{\partial c}=2S_c-\frac{1}{2}\int \!d^4x
~F\widetilde G_c$. It is easy to see that 
$S_c=\frac{1}{2c^2}\int \!d^4x_{} \int_0^c\! dc' \,c'
F_{\,} \widetilde G_{c'}$ is the primitive with the correct behaviour
under rescaling of $c$ and $F$. We conclude that 
$\frac{1}{c^2}S[cF]= \frac{1}{2c^2}\int \!d^4x_{} \int_0^c\! dc' \,c'
F_{\,} \widetilde G_{c'}$, and setting $c=1$ we obtain the thesis.
\end{proof}
\sk
We now consider the following expansion of an action $S[F]$ 
even under $F\to -F$,
\eq
S[F]=S^{\{0\}}[F]+S^{\{2\}}[F]+S^{\{4\}}[F]+\ldots\label{FexpansionS}
\en
where $S^{\{2n\}}$ is the term homogeneous in
$2n$ field strengths or their derivatives. Similarly we consider 
$S_c[F]=\frac{1}{c^2}S[cF]$ and expand 
$\widetilde G_c=2\frac{\delta S_c}{\delta F}$ as
\eqa
\widetilde G_c&=&\widetilde G_c^{\{1\}}+\widetilde
G_c^{\{3\}}+\widetilde G_c^{\{5\}}+\ldots~\nonumber\\
&=&\widetilde G^{\{1\}}+c^2\widetilde
G^{\{3\}}+c^4\widetilde G^{\{5\}}+\ldots~\label{FexpansionG}
\ena
where  $G_c^{\{2n-1\}}$ is the term homogeneous in
$2n-1$ field strengths or their derivatives, and in the second
equality we observed that it is also the
term proportional to
$c^{2n-2}$ so that $G_c^{\{2n-1\}}=c^{2n-2}G_{c=1}^{\{2n-1\}}=c^{2n-2}G^{\{2n-1\}}$.
Proposition \ref{fstep} then implies
\eq
S^{\{2n\}}=\frac{1}{4n}\int d^4x\, F\widetilde G^{\{2n-1\}}~.\label{2.20}
\en
This expression relates the term in the action proportional to the
$2n^{\rm th}$ power of $F$ or its derivatives, to the term in $\widetilde G$ proportional
to the $(2n-1)^{\rm th}$ power of $F$ or its derivatives.

\begin{Note}\label{note3'} Expression  $S=\frac{1}{2}\int \!d^4x_{} \int_0^1\! dc \,c
F_{\,} \widetilde G_{c}$, in the equivalent form 
\eq
S=\frac{1}{4}\int \!d^4x \int_0^1 d\kappa \,F\widetilde G_{\kappa}~\label{recid}
\en
\noi
(where $\kappa=c^2$) has been considered for self-dual theories in \cite{CKO} and called reconstruction identity.
It has been used, together with an expression equivalent to (\ref{2.20}), to reconstruct the action $S$ from  equations of motion with duality
rotation symmetry in examples with higher derivatives of $F$. 
\end{Note}
\begin{Note}\label{note3''} In Appendix \ref{TTT} we show that for nonlinear
theories without higher derivatives, the l.h.s.~and r.h.s of (\ref{Ttrace}) are
half the spacetime integral of the trace of the energy momentum
tensor. 
\end{Note}

\subsection{$U(1)$ duality rotations}
Nonlinear and higher derivatives electromagnetism admits $U(1)$ duality rotation symmetry if 
given  a field configuration $F,G$ that satisfies
(\ref{max22})-(\ref{maxwww}) then the rotated configuration
\eq\label{rotFG}
\left(\begin{array}{c}
F' \\
G'
\end{array}\right)=
\left(\begin{array}{cc}
\cos\al & -\sin {\al}\\
\sin\al & \cos\al
\end{array}\right)
\left(\begin{array}{c}
F  \\
G
\end{array}\right)~,
\en
that is trivially a solution of 
${\pa}_{\mu}
{\widetilde F}^{\mu\nu}  =0\,,\;
{\pa}_{\mu}
\widetilde{G}^{\mu\nu}=0\,,
$
satisfies also 
$\widetilde G'_{\mu\nu}=h_{\mu\nu}[F',\la]$, so that $F',G'$ is again a solution of
the  equations of motion.
If we consider an infinitesimal duality rotation, $F\to F+\Delta F$,
$G\to G+\Delta G$ then condition 
$\widetilde G'_{\mu\nu}=h_{\mu\nu}[F',\la]$ reads 
$\Delta\widetilde G_{\mu\nu}=
\int d^4x\, \frac{\delta
  h_{\mu\nu}}{\delta F_{\rho\sigma}}\,\Delta F^{\rho\sigma}$,
i.e., $\widetilde F_{\mu\nu}=-\int d^4x\, \frac{\delta
  h_{\mu\nu}}{\delta F_{\rho\sigma}}\,G^{\rho\sigma}$, that we simply rewrite
\eq\label{basicDR}
\widetilde F_{\mu\nu}=-\int d^4x\, \frac{\delta \widetilde G_{\mu\nu}}{\delta F_{\rho\sigma}}\,G^{\rho\sigma}~.
\en
It is straightforward to check that electromagnetism and Born-Infeld
theory satisfy (\ref{basicDR}).
\sk
If the theory is obtained from an action
functional $S[F]$ (in the field strength $F$ and its derivatives) then
(\ref{maxwww}) is given by
 \eq
\widetilde G^{\mu\nu}= 2\frac{\delta S[F]}{\delta F_{\mu\nu}} ~.\label{Sconst}
\en
In particular it  follows that  
\eq
\frac{\delta{\widetilde G}^{\mu\nu}}{\delta F_{\rho\sigma}}=\frac{\delta
{\widetilde G}^{\rho\sigma}}{\delta F_{\mu\nu}}~,
\en
hence the duality symmetry condition (or self-duality condition)
(\ref{basicDR}) equivalently reads 
$
\widetilde F_{\mu\nu}=-\int d^4x\, \frac{\delta\widetilde  G_{\rho\sigma}}{\delta F_{\mu\nu}}\,G^{\rho\sigma}
$. Now writing $\widetilde F_{\mu\nu}=\frac{\delta}{\delta
  F_{\mu\nu}}\,\frac{1}{2}\!\int \!d^4x \:F_{\rho\sigma}\widetilde F^{\rho\sigma}$ we equivalently have
\eq
\frac{\delta}{\delta F_{\mu\nu}}\int \! d^4x\:F\widetilde F+G\widetilde G=0~, \label{NGZ1}
\en
where $F\widetilde F=F_{\rho\sigma}\widetilde F^{\rho\sigma}$ and similarly
for $G\widetilde G$.
We require this condition to hold for any field configuration $F$
(i.e. off shell of (\ref{max22}), (\ref{max11})) and
hence  we obtain the Noether-Gaillard-Zumino (NGZ) self-duality
condition\footnote{Note that (\ref{NGZ2}) (the integrated form of
  (\ref{NGZ})) also follows in a straightforward manner  by repeating
  the passages in \cite{GZ} but with $G$ 
  the functional derivative of the action rather than the partial
  derivative of the lagrangian \cite{KT, AFZ}. This makes a difference for nonlinear
  theories which also contain terms in derivatives of $F$.}
\eq
\int \! d^4x~F\widetilde F+G\widetilde G=0 \label{NGZ2}~.
\en
The vanishing of the integration constant is determined for example by
the condition $G=\HF$  for weak and slowly varying fields,
i.e. by the condition that in this regime the theory is
approximated by usual electromagnetism. 

\sk

We also observe that  the NGZ self-duality condition (\ref{NGZ2}) is equivalent to
the 
invariance of $S^{inv}=S-\frac{1}{4}\int \!d^4x\,F\widetilde G$,
indeed under a rotation (\ref{rotFG}) with infinitesimal parameter
$\al$ we have  
$S^{inv}[F']-S^{inv}[F]=-\frac{\al}{4}\int\!d^4x\; F\widetilde F+ G\widetilde G=0$.

\begin{Note}\label{note1}
If the Lagrangian $L(F)$ of the action functional $S[F]$ does not  depend on the derivatives
of $F$, then we cannot integrate by parts and condition (\ref{NGZ2}) is equivalent to 
\eq
F\widetilde F+G\widetilde G=0\label{NGZ}
\en
since the field configuration $F$ is arbitrary (and therefore with
arbitrary support in spacetime).
On shell of (\ref{max22}), (\ref{max11})  we can introduce the electric potential $A_\mu$ and the
magnetic one $B_\mu$ so that
$F_{\mu\nu}=\partial_\mu A_\nu-\partial_\nu A_\mu$, 
$G_{\mu\nu}=\partial_\mu B_\nu-\partial_\nu B_\mu$
and (\ref{NGZ}) becomes the (Noether-Gaillard-Zumino)  current conservation condition $\partial_\mu
J^\mu=\partial_\mu( A_\nu \widetilde F^{\mu\nu}+B_\nu\widetilde
G^{\mu\nu})=0$.

Examples of  theories satisfying (\ref{NGZ2}) and not
(\ref{NGZ}) are obtained in Appendix \ref{HDactions}, where we  generalize the example
presented in \cite{BN}.
\end{Note}

\begin{Note}\label{Note2}
If the Lagrangian $L(F)$ is in Minkowski spacetime and if it depends only on $F$ and not on its
derivatives, then Lorentz invariance implies that it depends on the
scalars $FF$ and $(F\widetilde F)^2$, where the square in $(F\widetilde F)^2$
is needed for parity symmetry (space inversion invariance). More in
general we can consider a Lagrangian in curved spacetime that depends only
on the (pseudo)scalars $FF$ and $F\HF$.
It is then possible to integrate the differential equation
(\ref{NGZ}): $F_{\mu\nu}\HF^{\mu\nu}-4\,{\big(}^{\!\!\!\!\!\!\ast}\;{\frac{\partial
    L}{\partial F^{\mu\nu}}}\big)\frac{\partial L}{\partial
  F_{\mu\nu}}=0$. 
The solution is presented in  \cite{GZS} (and an alternative form is
presented in \cite{HKS}, see also \cite{Ivanov:2003uj}), it depends
on an arbitrary real valued function $v(s)$ of a real variable $s$, with the
initial condition that in the limit $s\rightarrow 0$ then $v(s)\to
-s$.  
 However  $L(F)$ is explicitly determined only after inverting a
 function related to $v(s)$. Hence explicit solutions $L(F)$ 
in terms of simple functions are very difficult to be found.

This suggests to look for solutions $L(F)$, and more in general actions
 $S[F]$,  that are  power series in the
coupling constant $\la$. 
\end{Note}

\begin{Note}
Given an action $S[F]$ with self-dual
equations of motion the one parameter family of
theories defined by $S_c[F]=\frac{1}{c^2}S[cF]$ (with $c\geq0$, cf. end of Section \ref{AFEOM}) are also self-dual. 
This is so because for any given value of $c$ the action $S_c[F]$ satisfies the corresponding NGZ self-duality
conditions (\ref{NGZ2}):
\eq
\int\!d^4x~ F\widetilde F-2\frac{\delta S_c[F]}{\delta
 F}2\widetilde{\frac{\delta S_c[F]}{\delta F}}\,=0\label{FFgGG}~.
\en 
Indeed  $\frac{\delta S_c[F]}{\delta
  F}\widetilde{\frac{\delta S_c[F]}{\delta F}}=
\frac{1}{c^4}\frac{\delta S[cF]}{\delta
 F}\widetilde{\frac{\delta S[cF]}{\delta F}}=
\frac{1}{c^2}\frac{\delta S[cF]}{\delta
 cF}\widetilde{\frac{\delta S[cF]}{\delta cF}}$.
Therefore condition (\ref{FFgGG}) is equivalent to
$\int\!d^4x ~cF c\widetilde{F}-2\frac{\delta S[cF]}{\delta
 cF}2\widetilde{\frac{\delta S[cF]}{\delta cF}}\,=0$.
These are the self-duality conditions for the action $S[\hat F]$ with
$\hat F=cF$. Hence these conditions hold because
the self-duality conditions for the 
 initial action $S$ hold for any field configuration.

This result allows to provide jet another derivation of the invariance
under duality rotation of expression (\ref{Ttrace}) for self-dual actions:
One has just to recall that the variation of the action with respect to a duality invariant
parameter is duality invariant \cite{GZ}.

\end{Note}

\subsection{Complex and chiral variables}\label{complexvariables}

Following Schr\"odinger \cite{Schrodinger, GZS} it is convenient to consider the complex  variables 
\eq
T=F-iG~,~~\bar T=F+i G~,
\en
that under duality transform with a phase: $T\to  e^{-i\al}T$,
$\overline T \to e^{i\al}\overline T$, and that treat on an equal footing
the electric and magnetic field strengths $F$ and $G$.
In the new variables the NGZ self-duality condition (\ref{NGZ2}) reads $\int
d^4x \;\bar{T}\,{{\widetilde T}}=0$, or equivalently
\eq\label{GZN2}
\int \!d^4x \sqrt{g}\,~\bar{T}{{\HT}}\,=\,0~.
\en
Following \cite{BN} we further consider the complex (anti)selfdual
combinations 
$F^\pm=\frac{1}{2}(F\pm i\HF)$, $G^\pm=\frac{1}{2}(G\pm i\HG)$
and
\eqa\label{TPM}
T^+&=&\frac{1}{2}(T+i\HT)=F^+-iG^+~,~~~~~~~~~~T^-=\frac{1}{2}(T-i\HT)=F^--iG^-~,~~\\
\overline{T^+}&=&\frac{1}{2}(\bar T-i ~\,{\bar T}^{\!\!\!\!\!\!\!\!\!\ast\,~~})=F^-+iG^-={\Tbarm}~,~~
\overline{T^-} =\frac{1}{2}(\bar T+i\HbT)=F^+
+iG^+=\Tbarp\,.~~~~~~\label{OTPM}
\ena
The fields in the first row have duality charge
$+1$ because transform with $e^{-i\al}$ under the duality rotation (\ref{rotFG}), while
their complex conjugates in the second row have duality charge $-1$.
Complex conjugation inverts chirality hence $T^+$ and
$\overline{T^-}=\Tbarp$ have chirality $+1$ while 
$T^-$ and $\overline{T^+}=\Tbarm$ have chirality $-1$.

The (anti)selfdual combinations have definite behaviour in the
coupling constant $\la\to 0$ limit. Since in this limit we recover
usual electromagnetism we have $G\to\HF$
and $G^\pm\to \mp i F^\pm$, and hence
\eq
T^+\,\to\, 0~~,~~T^-\,\to 2 F^-~.
\en
The NGZ self-duality condition (\ref{NGZ2}) in these variables reads  (use
$({\HT})^{\, \pm}= {{(T^\pm)_{\,}}}^{\!\!\!\!\!\!\!\!\!\!\!\!\!\!\!\!\!\!\!\!\!\ast\,~~~}\,\,\,=\mp i T^\pm\,$)
\eq
\int \!d^4x\sqrt{g}\;\,\, T^+\,\overline{T^-} -\,\overline{T^+}\,T^-\, =0~.
\en
\subsection{The action functional ${\cal I}[T^-,{\overline{T^-}}]$\label{ITTm}}
As noticed in \cite{CKR},
 the Bossard and
Nicolai proposal \cite{BN} for constructing self-dual equations of
motions is easily expressed in terms of  chiral variables:
We  consider a real valued functional ${\cal I}[T^-,{\overline{T^-}}]$ in the 
chiral variables\footnote{We stress that 
the independent variables in ${\cal I}$ are $T^-$ and  its complex conjugate
  $\overline{T^-}$, just like in $S[F]$ or $S[F^-,F^+]$ the independent variables are
  $F^-$ and its complex conjugate $F^+$.  The variables $T^+,
  \overline{T^+}$ (and hence $T,\overline T$) are then defined
in terms of the $T^-$, $\overline{T^-}$ ones.} $T^- ,{\overline{T^-}}$
and define the constitutive relations (called deformed twisted self
duality conditions in \cite{BN}, and  nonlinear twisted self-duality
conditions in \cite{CKR})

\eq
{T^+}^{\mu\nu}=\frac{1}{\sqrt{g}}\frac{\delta {\cal I}[T^-,{\overline{T^-}}]}{\delta{\overline{T^-}_{\!\!\mu\nu}}}~~,~~~~{\overline{T^+}{_{}}^{\mu\nu}}=\frac{1}{\sqrt{g}}\frac{\delta {\cal I}[T^-,{\overline{T^-}}]}{\delta T^-_{\;\mu\nu}}\label{Iconst}~.
~
\en
Reality\footnote{The reality condition is ${\cal
    I}[T^-,{\overline{T^-}}]=\overline{{\cal
      I}[T^-,{\overline{T^-}}]}$. Then we  extend
${\cal I}[T^-,{\overline{T^-}}]$ to ${\widehat{\cal
  I}}[T^-,{\overline{U^-}}]\equiv \frac{1}{2}\big({\cal
  I}[T^-,{\overline{U^-}}]+\overline{{\cal
    I}[U^-,{\overline{T^-}}]}_{\,}\big)$ that by construction
satisfies 
$\overline{{\widehat{\cal I}}\big[T^-,{\overline{U^-}}\big]}={\widehat{\cal
  I}}\big[\,{{{U^-}}}\,,\, \overline{T^-}\,\big]$ for arbitrary
complex and independent fields $T^-$ and ${{U^-}}$. 
The functional variation in (\ref{Iconst}), where $\,{\overline{T^-}}$ is kept
independent from $T^-$, then explicitly reads  
$\,
T^+=\frac{1}{\sqrt{g}}\frac{\delta {\widehat{\cal I}}[T^-,{\overline{U^-}}]}{\delta{\overline{U^-}}}\Big|_{{U^-=T^-}}\,,~{\overline{T^+}}=\frac{1}{\sqrt{g}}\frac{\delta {\widehat{\cal I}}[T^-,{\overline{U^-}}]}{\delta T^-}\Big|_{{U^-=T^-}}
$.
}  of
$\cal I$
implies that the second equation is just the
complex conjugate of the first one, hence the constitutive relations
are 6 real equations as in (\ref{maxwww}) and in (\ref{Sconst}).
If  moreover ${\cal I}$ is duality invariant under $T^-\to
e^{-i\al}T^-$, $\overline{T^-}\to e^{i\al}\overline{T^-}$ then
relations (\ref{Iconst}) imply the NGZ self-duality condition (\ref{NGZ2});
indeed under an infinitesimal duality rotation $T^-\to T^-+\Delta T^-$,
$\Delta T^-=-i\al T^-$ 
we have:
\eq
0=\Delta{\cal I}=\int d^4x\;\,\frac{\delta{\cal I}}{\delta {\overline{T^-}}}\Delta
{\overline{T^-}}+
\frac{\delta{\cal I}}{\delta T^-}\Delta
T^-=i\al \int \!d^4x\sqrt{g}\,  \,\,T^+\,{\overline{T^-}} -\,{\overline{T^+}} \,T^-~.
\en
This is a  powerful approach because the constitutive relations are
easily given (though the dependence
$\widetilde G_{\mu\nu}=h_{\mu\nu}[F,\la]$ is determined implicitly), and the self-duality condition is also easily
implemented: just consider duality invariant functionals ${\cal I}$. Furthermore,
Lorentz (or, in curved spacetime, diffeomorphisms)  invariance of the
functional $\cal I$ implies Lorentz (diffeomorphisms)
covariance of the nonlinear and higher derivatives equations of motion. 
\sk\sk

The problem with this approach is that of finding an
action functional $S[F]$ such that the constitutive relations (\ref{Sconst})$\,$:
$
{G^{}_{}}^{\!\!\!\!\!\:\!\!\!\!\!\!\ast~~}{}^{\mu\nu}= \frac{2}{\sqrt{g}}\frac{\delta S[F]}{\delta F_{\mu\nu}} 
$,
are equivalent to the constitutive relations (\ref{Iconst}).

We first approach this problem perturbatively, and give explicit
expressions for the lowest order perturbations; in the next section
we solve the problem (albeit implicitly) by using a Legendre transform
between $S$ and ${\cal I}$.

In the perturbative approach we assume that ${\cal I}={\cal I}[T^-,\overline{T^-}]$ is a power
series in the coupling constant $\la$, 

\eq
{\cal I}[T^-,\overline{T^-}]={\cal I}^{[0]}[T^-,\overline{T^-}]+{\cal I}^{[1]}[T^-,\overline{T^-}]+{\cal I}^{[2]}[T^-,\overline{T^-}]+\ldots
\en
where ${\cal I}^{[n]}$ denotes the term proportional to $\la^n$, and
in this expansion  $T^-,\overline{T^-}$ are considered the elementary
independent fields (and hence $\la$ independent).

Then $S[F]=S[F^-,F^+]$ is found as a power series 
\eq
S[F^-,F^+]=S^{(0)}[F^-,F^+]+S^{(1)}[F^-,F^+]+S^{(2)}[F^-,F^+]+\ldots
\en
where  ${S}^{(n)}$ denotes the term proportional to $\la^n$, and
in this expansion  $F^-,F^+$ are the elementary  independent fields (and hence $\la$ independent).
The initial condition is  ${\cal I}^{[0]}=0$,  that corresponds to linear
electromagnetism, ${S}^{(0)}=-{\frac{1}{4}}\int\!d^4x\sqrt{g}\,F^2$.

Since $\overline{T^+}=F^-+iG^-$ implies ${\overline{T^+}}^{(n)}=i{G^-}^{(n)}$ for $n\geq
1$, we see that 
equivalence of the constitutive relations (\ref{Iconst}) and
(\ref{Sconst}), that we rewrite as $G^{\pm\,\mu\nu}=\frac{\pm2i}{\sqrt{g}}\frac{\delta S}{\delta
  F^\pm_{\,\mu\nu}}$, is  obtained by requiring order by order in $n$  that
the term  ${S}^{(n)}$
satisfies the condition
\eq\label{recS}
2\frac{\delta S^{(n)}}{\delta F^-_{\,\mu\nu}}=\Big(\,\frac{\delta {\cal I}}{\delta
  T^-_{\,\mu\nu}}\Big|_{{{{}^{T^-[F^-,F^+]}_{\overline{T^-}[F^-,F^+]}}}}\Big)^{(n)}
\en
where on the right hand side we consider 
$\frac{\delta {\cal I}}{\delta T^-}$ as a functional of $F^-$ and
$F^+$ because $T^-=T^-[F^-,F^+]$;
%  $\overline{T^-}={\overline{T^-}}[F^-,F^+]$
 the dependence
      $T^-=T^-[F^-,F^+]$
being implicitly determined by the chiral variables constitutive
relations (\ref{Iconst}) and the relations $T^{\pm}=F^\pm-iG^\pm$,
that, in order to stress that the independent variables are
$T^-$ and $\overline{T^-}$, we rewrite as
\eqa
&&~\,~~2F^-=T^-+\frac{1}{\sqrt{g}}\frac{\delta {\cal I}[T^-,{\overline{T^-}}]}{\delta T^-_{\;\mu\nu}}~~,\,~~~~~~~2F^+=\frac{1}{\sqrt{g}}\frac{\delta {\cal I}[T^-,{\overline{T^-}}]}{\delta \overline{T^-}_{\;\mu\nu}} +\overline{T^-}~,\label{al}\\~
&&-2iG^-=T^--\frac{1}{\sqrt{g}}\frac{\delta {\cal I}[T^-,{\overline{T^-}}]}{\delta T^-_{\;\mu\nu}}~~,~~~-2iG^+=\frac{1}{\sqrt{g}}\frac{\delta {\cal I}[T^-,{\overline{T^-}}]}{\delta \overline{T^-}_{\;\mu\nu}} -\overline{T^-}~.
\label{al1}
\ena
In Appendix \ref{app1} we determine the first two nontrivial terms of the
nonlinear and higher derivatives electromagnetic action associated
with an arbitrary functional ${\cal I}={\cal I}^{[0]}+{\cal I}^{[1]}+{\cal I}^{[2]}+\ldots$, with ${\cal I}^{[0]}=0$.
They read
\eqa
S^{(1)}[F^-,F^+]&=&\frac{1}{4}{\cal I}^{[1]}[2F^-,2F^+]~,~~\nn\\[.4em]
S^{(2)}[F^-,F^+]&=&\frac{1}{4}{\cal I}^{[2]}[2F^-,2F^+]-
\frac{1}{2}\int \!\!d^4x\frac{1}{\sqrt{g}}\,~\frac{\delta {S}^{(1)}}{\delta
F^-}
\frac{\delta {S}^{(1)}}{\delta
F^-}+\frac{\delta {S}^{(1)}}{\delta
F^+}\frac{\delta {S}^{(1)}}{\delta
F^+}~~.\label{Scorrec}
\ena
We recall that 
at zeroth 
order
$S^{(0)}[F^-,F^+]=-\frac{1}{4}\int\!d^4x\sqrt{g}~\, {F^-}^2 +_{\,} {F^+}^2=-\frac{1}{4}\int\!d^4x\sqrt{g}~ F^2$.

\subsection{From $S[F]$ to ${\cal I}[T^-,\overline{T^-}]$ via Legendre transform}
We now show, as in\cite{Ivanov:2003uj},  that ${\cal
  I}[T^-,\overline{T^-}]$ and $S[F]$ are related by 
\eq\label{LegendreT}
\!\frac{1}{4}{\cal I}[T^-,\overline{T^-}]=S[F]+\int\!d^4x\sqrt{g}\, ~
 \frac{1}{2}T^-F^--\frac{1}{8}{T^-}^2-\frac{1}{4}{F^-}^2 +
\frac{1}{2}\overline{T^-} F^+-\frac{1}{8}{\overline{T^-}}^2-\frac{1}{4}{F^+}^2~.
\en
This is actually a Legendre transform, and it implies that
the constitutive relations (\ref{Iconst}) are equivalent to the
constitutive relations 
(\ref{Sconst}),
%, $\widetilde G^{\mu\nu}= 2\frac{\delta  S[F]}{\delta F_{\mu\nu}} $
i.e., $G^{\pm\,\mu\nu}=\frac{\pm2i}{\sqrt{g}}\frac{\delta S[F^-,F^+]}{\delta
  F^\pm_{\,\mu\nu}}$. 

In order to recognize (\ref{LegendreT}) as a Legendre transform we define the functional
\eq
U[F^-,F^+]=-2S[F^-,F^+]+\frac{1}{2}\int\! d^4x \sqrt{g}~{F^-}^2+{F^+}^2 ~.
\en
Recalling that $iG^-=F^--T^-$ (see (\ref{TPM})) the constitutive relations $G^{\pm\,\mu\nu}=\frac{\pm2i}{\sqrt{g}}\frac{\delta S[F^-,F^+]}{\delta
  F^\pm_{\,\mu\nu}}$ now read
\eq\label{TUF}
T^-=\frac{1}{\sqrt{g}}\frac{\delta U}{\delta F^-}~, ~~\overline{T^-}=\frac{1}{\sqrt{g}}\frac{\delta U}{\delta F^+}~.
\en
These relations (at least for weak and slowly varying fields) implicitly
determine $F^\pm=F^\pm[T^-,\overline{T^-}]$.
We then consider the Legendre transform
\eq\label{LTVU}
V[T^-,\overline{T^-}]=-U[F^-,{F^+}]+\int\!d^4x\sqrt{g}~\,T^-F^-+\overline{T^-}F^+~.
\en
Varying w.r.t. $T^-$ and $\overline{T^-}$ we obtain that the
dependence $F^\pm=F^\pm[T^-,\overline{T^-}]$ is given by
\eq\label{FVT}
F^-=\frac{1}{\sqrt{g}}\frac{\delta V}{\delta T^-}~,~~F^+=\frac{1}{\sqrt{g}}\frac{\delta V}{\delta {\overline{T^-}}}~.
\en
Therefore (\ref{FVT}) are the inverse relations of (\ref{TUF}), in
particular they are equivalent to 
 $G^{\pm\,\mu\nu}=\frac{\pm2i}{\sqrt{g}}\frac{\delta S[F^-,F^+]}{\delta
  F^\pm_{\,\mu\nu}}$. 
We now define the functional $ {\cal I}[T^-,\overline{T^-}]$ via
\eq
V[T^-,\overline{T^-}]=\frac{1}{2}{\cal
    I}[T^-,\overline{T^-}]+\frac{1}{4}\int\!d^4x\sqrt{g}~\,{T^-}^2+{\overline{T^-}}^2~.
\en
Relation (\ref{LegendreT}) is trivially equivalent to
(\ref{LTVU}). Furthermore the constitutive relations 
 $G^{\pm\,\mu\nu}=\frac{\pm2i}{\sqrt{g}}\frac{\delta S[F^-,F^+]}{\delta
  F^\pm_{\,\mu\nu}}$ and (\ref{Iconst}) are equivalent because
(\ref{FVT}) is easily seen to be equivalent to (\ref{al}), i.e., to  (\ref{Iconst}).

\sk
Let's now study duality rotations.
We consider $F$ to be the elementary fields and let
$S[F]$ give self-dual constitutive relations.
Under infinitesimal duality rotations  (\ref{rotFG}), $F\to F+\Delta F=F-\al G$,
$G\to G+\Delta G=G+\al F$ we have (since
$T^-=F^--\frac{2}{\sqrt{g}}\frac{\delta S}{\delta F^-}$) that $T^-\to
T^-+\Delta T^-=T^--i\al T^-$. We calculate the variation of  (\ref{LegendreT}) under duality
rotations. After a little algebra we
see that 
\eqa\label{sdequiv}
\Delta {\cal I}&=& {\cal I}[T^-+\Delta
T^-,\overline{T^-}+\Delta\overline{T^-}]- {\cal I}[T^-,\overline{T^-}]\\
&=&S[F+\Delta F]-S[F]+\frac{\al}{4}\int\!d^4x\sqrt{g}~\,G\widetilde
G-F\widetilde F
=
-\frac{\al}{4}\int\!d^4x\sqrt{g}~\,G\widetilde
G+F\widetilde F
=0\nn\ena
where we used that $S[F+\Delta F]-S[F]=\int\!d^4x\; \frac{\delta S}{\delta F}\Delta
F=-\frac{\al}{2}\int\!d^4x\;\widetilde G G$, and the  self-duality conditions (\ref{NGZ2}). 
Hence $\cal I$ 
is invariant under duality rotations.

Vice versa, we can consider $T^-$, $\overline{T^-}$ to be the elementary
fields and assume  ${\cal I}[T^-,\overline{T^-}]$ to be duality
invariant. Then from (\ref{al}) and $iG^-=F^--T^-$, i.e., form
(\ref{al}) and (\ref{al1}), it follows that
under the infinitesimal rotation $T^-\to T^-+\Delta T^-=T^--i\al T^-$ we have
 $F\to F+\Delta F=F-\al G$,
$G\to G+\Delta G=G+\al F$, and from (\ref{sdequiv}) we
 recover the self-duality conditions
(\ref{NGZ2}) for the action $S[F]$.
\sk
This shows the equivalence betweeen the $S[F]$ and the ${\cal
  I}[T^-,\overline{T^-}]$
formulations of self-dual constitutive relations. Hence
the deformed twisted self-duality condition
proposal originated in the context of supergravity counterterms is
actually the general framework needed to discuss self-dual theories
starting from a variational principle.

\section{Constitutive relations without self-duality\label{constitutiverelations}}

The constitutive relations  (\ref{maxwww}) or (\ref{Sconst}) express $G$ as a function
of $F$ and its derivatives. They do not treat on equal footing $F$
and $G$  and therefore their eventual compatibility with
duality symmetry is hidden.
%
% in the nonlinear action of the
%electric-magnetic duality rotatations on $G=h[F,\la]$.  This action is
On the other hand the independent chiral variables $T^-, \overline{T^-}$ of  the constitutive relations
(\ref{Iconst}) (the deformed twisted self duality
conditions) treat by construction $F$ and $G$  on equal footing, and
duality rotations are simply implemented via multiplication by a phase.
There however the relation betweeen $G$ and $F$ is implicitly
given.
Moreover, already the description of Born-Infeld theory is quite
nontrivial in these chiral variables.
We here further study  the nonlinear relations between these two
formulations and related ones. This study
sheds light on the structure of self-dual theories, in particular it will lead to a
closed form expression of  the constitutive relations (\ref{Iconst}) for the Born-Infeld theory.

We proceed with a manifestly duality symmetric reformulation of the constitutive relations
(\ref{maxwww}) (and more precisely of the relations (\ref{GNN}) below).
This is achieved doubling them (to $\HG=h[F,\la]$ and $\HF=k[G,\la]$)
and then constraining them via a symplectic
matrix $\cal M$. This matrix is well known in the study of  duality rotations in
linear electromagnetism coupled to scalar fields
%, like for example in extended supergravities
  (see e.g. \cite{AFZ}). Here $\cal M$ will be in
general dependent  on the field strengths $F,G$ and their
derivatives, leading to nonlinear and higher derivatives
electromagnetism.
 Its structure will be fully determined only by requiring 
that  the doubled constitutive relations consistently give just 6 independent
equations that determine $G$ in terms of $F$ and vice versa. Notice that
even thought our aim is the study of self-dual
theories, in this section we do not assume that the constitutive relations
are compatible with duality symmetry.

The constraints on the $\cal M$ matrix  are then 
analized in terms of the  Schr\"odinger's variables  $T$, $\overline T$.
It is in these variables that Born-Infeld theory has an extemely
simple description \cite{Schrodinger, GZS}.

\subsection{The  ${\cal N}$ and $\M$ matrices}
More insights in the constitutive relations (\ref{maxwww})  can be obtained if we
restrict our study to the wide subclass that can be written as
\eq
{\HG}{}_{\mu\nu}={\cN_2}_{\,} F_{\mu\nu} +{\cN_1}_{\,} \HF{}_{\!\mu\nu}~,
\label{GNN}
\en
where $\cN_2$ is a real scalar field,
while $\cN_1$ is a real pseudo-scalar field (i.e., it is not invariant
under parity, or, if we are in curved spacetime, it is not invariant under
an orientation reversing coordinate transformation).  
Explicit examples of more general constitutive relations are in
 Appendix \ref{HDactions}.
 As usual in the literature we set 
\eq
\cN=\cN_1+i\cN_2~,
\en
then, relations
(\ref{GNN}) are equivalent to $G^+=\N F^+$.
In nonlinear theories $\cN$ depends on the field strength $F$, and in
higher derivative theories also on derivatives of $F$, we have
therefore in general a functional dependence $\cN=\cN[F,\la]$.
Furthermore $\cN$ is required to satisfy $\cN\to -i$ in
the limit $\la\to 0$ so that we recover classical electromagnetism
when the coupling constant $\la\to 0$, or otherwise stated, in the
weak and slowly varying field limit, i.e., when we discard 
higher powers of $F$ and derivatives of $F$.
We also assume that $\cN$ can be expanded in power series of the
coupling constant\footnote{By $\la$ we can denote also more than
one coupling constant. For example when a nonlinear theory in flat
space is generalized to a curved background there naturally appears
a new coupling  related to the
background curvature. Similarly, as already said,  if the theory has higher derivatives
so that it can be expanded in appropriate powers of derivatives of
$F$.} $\la$ (we will relax this assumption in Note \ref{Note5}). Then, since $\cN_2= -1+O(\la)$,
$\N_2$ is invertible, and from relation (\ref{GNN}) we obtain $\widetilde F=\cN_2^{-1} \cN_1
F-\cN_2^{-1} G$ and $\widetilde G=\cN_2
F+\cN_1\cN_2^{-1}\cN_1F-\cN_1\cN_2^{-1} G$
so that the constitutive relation \eqn{GNN} is equivalent to the more duality
symmetric one
\eq\label{FFomMFF}
\left(\begin{array}{c}
\HF \\
\HG
\end{array}\right)=
\left(\begin{array}{cc}
0 & -1\\
1 & 0
\end{array}\right)\,\cM\,
\left(\begin{array}{c} 
F  \\
G
\end{array}\right)
\en
where the matrix $\cM$ is given by
\eq
\M(\N)=
\left(\begin{array}{cc}
1 & -\N_1\\
0 & 1
\end{array}\right)
\left(\begin{array}{cc}
\N_2 & 0  \\
0 & \N_2^{-1}
\end{array}\right)
\left(\begin{array}{cc}
1 & 0\\
- \N_1 & 1
\end{array}\right)
=
\left(\begin{array}{cc}
\N_2 +\N_1 \,\N_2^{-1}\,  \N_1 &~ - \N_1 \,\N_2^{-1}  \\
-\N_2^{-1}\, \N_1 &~ \N_2^{-1}
\end{array}\right)~.~~
\label{M(N)}
\en
Finally, in order to really treat on equal footing the electric and
magnetic field strengths $F$ and $G$, we should consider
functionals ${N}_1[F,G,\la]$ and ${N}_2[F,G,\la]$ such that the constitutive relations
${\HG}={N_2[F,G,\la]}_{\,} F +{N_1[F,G,\la]}_{{\,}} \HF$
are equivalent to (\ref{GNN}), i.e.,  such that on shell of these relations, ${N}_1[F,G,\la]=\N_1[F,\la]$ and 
${N}_2[F,G,\la]=\N_2[F,\la]$.

Since we assume  $N_1[F,G,\la]$ and $N_2[F,G,\la]$ to be  power series
in $\la$ with $N_1=O(\la)$ and 
$N_2=-1+ O(\la)$ 
 the constitutive relations
${\HG}={N_2[F,G,\la]}_{\,} F +{N_1[F,G,\la]}_{{}} \HF$ are well given in the sense that they are always equivalent to the 
${\HG}={\N_2[F,\la]}_{\,} F +{\N_1[F,\la]}_{{}} \HF$ ones
  (just expand in power series of $\la$ and iteratively substitute $G$ in
  $N_1[F,G,\la]$ and $N_2[F,G,\la]$).

Henceforth, with slight abuse of notation,
from now on the $\N$, $\N_1$, $\N_2$ fields in
(\ref{GNN})-(\ref{M(N)}) will  in general be functionals of
both $F$ and $G$.

\sk

The matrix $\M(\N)$ in (\ref{M(N)}) is symmetric and symplectic 
(it has indeed determinant equal to 1). The space of symmetric and symplectic
matrices has two disconnected components, that of  positive definite
and of negative definite matrices.  $\M(\N)$
is negative definite because $\N_2^{-1}\to -1+O(\la)$.
Recalling that any symmetric, symplectic and negative definite $2\times 2$ matrix is of the kind
(\ref{M(N)}) with $\N_1$ real and $\N_2$ real and negative
(for a proof see for example the review \cite{AFZ}, Appendix A, where
it is also shown  that $\M$ and $\N=\N_1+i\N_2$ parametrize the coset
space $Sp(2, \mathbb{R})/U(1)$), we have that
\sk
\begin{Proposition}\label{propos2} Any symmetric and symplectic $2\times 2$
matrix $\M$
that has a power series expansion in $\la$ with $\M=-1+O(\la)$ is of the kind 
(\ref{M(N)}) with $\N_1=O(\la)$ real and $\N_2=-1+O(\la)$ real. 
\end{Proposition}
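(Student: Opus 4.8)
The plan is to reduce the statement to a matching of matrix entries, exploiting that for $2\times 2$ matrices symplecticity is nothing but unimodularity. First I would record that a symmetric symplectic $2\times 2$ matrix is precisely a symmetric matrix of determinant one, so that writing
\eq
\M=\left(\begin{array}{cc} a & b\\ b & d\end{array}\right)~,\qquad ad-b^2=1~,
\en
the hypothesis $\M=-1+O(\la)$ fixes the leading orders $a=-1+O(\la)$, $d=-1+O(\la)$ and $b=O(\la)$.

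Next I would solve for $\N_1,\N_2$ by comparing $\M$ with the explicit parametrization (\ref{M(N)}). The bottom row of (\ref{M(N)}) already determines everything: the $(2,2)$ entry gives $\N_2^{-1}=d$, i.e. $\N_2=d^{-1}$, and the $(2,1)$ entry gives $-\N_2^{-1}\N_1=b$, i.e. $\N_1=-b/d$. The only thing left is to check that the $(1,1)$ entry is reproduced automatically, and this is exactly where the symplectic constraint enters: one computes $\N_2+\N_1\N_2^{-1}\N_1=(1+b^2)/d$, which equals $a$ precisely because $ad-b^2=1$. Thus every symmetric unimodular $\M$ with $d\neq 0$ is of the form (\ref{M(N)}); alternatively one may simply invoke the classification recalled before the Proposition (Appendix A of \cite{AFZ}), since $\M=-1+O(\la)$ is negative definite for small $\la$.

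Finally I would track the orders in $\la$. Because its constant term is $-1$, the entry $d=-1+O(\la)$ is an invertible real power series, so $\N_2=d^{-1}=-1+O(\la)$ is real with the required leading behaviour (in particular negative). Likewise $\N_1=-b/d$ is the quotient of $b=O(\la)$ by the invertible series $d$, hence a real power series with $\N_1=O(\la)$. This yields exactly the claimed form.

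The step requiring the most care is the invertibility of $\N_2$ as a formal power series together with the consistency of the $(1,1)$ entry: both hinge on the constant term of $d$ being nonzero (here $-1$) and on $\det\M=1$, so that no condition beyond the stated orders is needed. There is otherwise no genuine obstacle, the content being the bookkeeping that the power-series orders of $\N_1,\N_2$ match those of $\M-(-1)$.
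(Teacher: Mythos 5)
Your proof is correct, and it takes a genuinely different route from the paper's. The paper computes essentially nothing: it remarks that the symmetric symplectic $2\times 2$ matrices form two connected components, the positive and the negative definite ones, places $\M=-1+O(\la)$ in the negative definite component, and then invokes the classification recalled just before the statement (Appendix A of \cite{AFZ}, where $\M$ and $\N=\N_1+i\N_2$ are shown to parametrize the coset $Sp(2,\mathbb{R})/U(1)$): every symmetric, symplectic, negative definite $2\times 2$ matrix is of the form (\ref{M(N)}) with $\N_1$ real and $\N_2$ real and negative; the orders in $\la$ then follow. You instead prove the needed parametrization from scratch: since for $2\times 2$ matrices the symplectic condition is exactly $\det\M=1$, you read off $\N_2=d^{-1}$ and $\N_1=-b/d$ from the second row of (\ref{M(N)}), and you observe that the $(1,1)$ entry $(1+b^2)/d$ equals $a$ precisely by unimodularity, the power-series hypothesis guaranteeing that $d=-1+O(\la)$ is an invertible series so the divisions are legitimate. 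What your route buys is self-containedness and slightly more generality: the entry-matching shows the parametrization covers any symmetric unimodular matrix whose $(2,2)$ entry is invertible, definiteness playing no role beyond fixing the leading orders. What the paper's route buys is the geometric picture: the identification with $Sp(2,\mathbb{R})/U(1)$ is the statement that generalizes to $Sp(2n,\mathbb{R})/U(n)$ when scalar fields are present, and the two-component observation explains structurally why the hypothesis $\M=-1+O(\la)$, as opposed to $\M=1+O(\la)$, is what forces $\N_2$ to be negative.
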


We now reverse the argument that led from (\ref{GNN}) to (\ref{FFomMFF}).
 We consider 
constitutive relations of the form
\eq\label{FFomMFFp}
\left(\begin{array}{c}
\HF \\
\HG
\end{array}\right)=
\left(\begin{array}{cc}
0 & -1\\
1 & 0
\end{array}\right)\,\cM[F,G,\la]\,
\left(\begin{array}{c} 
F  \\
G
\end{array}\right)
\en
that treat on equal footing $F$ and $G$, and where $\M=\M[F,G,\la]$ is now
an {\sl arbitrary} real
$2\times 2$ matrix  (with scalar entries $\M_{ij}$).
We require
$\M=-1+O(\la)$
so that we recover classical
electromagnetism when the coupling constant $\la\to 0$.
A priory (\ref{FFomMFFp}) is a set of  12 real equations, twice as
much as those present in the constitutive relations (\ref{GNN}). We want only 6 of
these 12 relations to be independent so to be able to determine $G$
in terms of  independent fields $F$ (or equivalently $F$ in terms of  independent fields $G$). Only in this case the constitutive relations are well given. 
\sk
\begin{Proposition}\label{propos3} The constitutive relations  (\ref{FFomMFFp}) 
with $\M[F,G,\la]=-1+O(\la)$ are  well given  if and only if on shell of  (\ref{FFomMFFp})  the matrix 
$\M[F,G,\la]$
is symmetric and symplectic.
\end{Proposition}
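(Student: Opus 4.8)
The plan is to pass to the complex (anti)self\-dual variables $F^\pm,G^\pm$ of Section \ref{complexvariables}, which linearize the action of the Hodge star. The decisive structural fact is that the entries $\M_{ij}$ of $\M[F,G,\la]$ are ordinary scalar fields and not differential operators, so $\ast$ is $\M$\-linear and maps the self\-dual sector to itself. Writing $J=\left(\begin{smallmatrix}0&-1\\1&0\end{smallmatrix}\right)$ and using $\ast F^\pm=\mp i F^\pm$, I would project the twelve real equations (\ref{FFomMFFp}) onto their self\-dual part; since $J\M$ commutes with the self\-dual projection, the twelve equations collapse to the single complex relation
\[
J\,\M\begin{pmatrix}F^+\\ G^+\end{pmatrix}=-\,i\begin{pmatrix}F^+\\ G^+\end{pmatrix},
\]
the anti\-self\-dual projection being its complex conjugate because $\M$ is real. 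Thus (\ref{FFomMFFp}) holds on a configuration exactly when the two\-vector $v=\left(\begin{smallmatrix}F^+\\G^+\end{smallmatrix}\right)$ is an eigenvector of the \emph{real} matrix $J\M$ with the non\-real eigenvalue $-i$. The spectral intuition is then immediate: a real $2\times2$ matrix carrying the eigenvalue $-i$ must also carry $\overline{-i}=i$, so $\mathrm{tr}(J\M)=0$ and $\det(J\M)=1$; and since $J\M=\left(\begin{smallmatrix}-\M_{21}&-\M_{22}\\ \M_{11}&\M_{12}\end{smallmatrix}\right)$ one has $\mathrm{tr}(J\M)=\M_{12}-\M_{21}$ and $\det(J\M)=\det\M$, which are precisely symmetry and the symplectic (unit determinant) condition.

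To obtain the equivalence in both directions with a single computation I would use the row structure of the displayed eigenvector equation. Because $\M_{22}=-1+O(\la)$ is invertible, its first row is always solvable for $G^+$, giving $G^+=\M_{22}^{-1}(i-\M_{21})F^+$; this is the content of six of the twelve relations and always determines $G$ from $F$. The relations are then \emph{well given} precisely when the remaining six relations, i.e.\ the second row $\M_{11}F^++(\M_{12}+i)G^+=0$, are automatically satisfied. Substituting the first row into the second yields $\big[\M_{11}+(\M_{12}+i)\M_{22}^{-1}(i-\M_{21})\big]F^+=0$, and since $F\neq0$ forces $F^+\neq0$, this holds iff the bracket vanishes, i.e.\ iff
\[
\M_{11}\M_{22}=1+\M_{12}\M_{21}-i\,(\M_{12}-\M_{21}).
\]
As the $\M_{ij}$ are real, separating real and imaginary parts gives exactly $\M_{12}=\M_{21}$ (symmetric) and $\M_{11}\M_{22}-\M_{12}\M_{21}=1$ (symplectic). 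Because $\M$ here is evaluated along the solution locus $G=G[F]$, this reads as the on\-shell statement of the proposition, and the argument is manifestly reversible.

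The step I expect to require the most care is the precise meaning of ``well given'' in the presence of the field dependence $\M=\M[F,G,\la]$: the eigenvalue conditions are imposed only on the solution locus, which is why symmetry and symplecticity are asserted on shell rather than identically, and one must check that solving the first row for $G$ is legitimate and does not reintroduce hidden constraints. This is exactly where the hypothesis $\M=-1+O(\la)$ does its work, guaranteeing both that $\M_{22}$ is invertible, so that $G^+$ is genuinely determined, and that the relevant eigenvector is nonzero at leading order, where $v\propto(1,-i)$ reproduces $G^+\to -iF^+$ and hence ordinary electromagnetism. Once symmetry and symplecticity are established, I would close the loop by invoking Proposition \ref{propos2}: any such $\M$ is of the form $\M(\N)$ in (\ref{M(N)}) with $\N_1=O(\la)$, $\N_2=-1+O(\la)$, so the doubled relations (\ref{FFomMFFp}) are equivalent to the six constitutive relations (\ref{GNN}), confirming that they indeed determine $G$ in terms of $F$ and vice versa.
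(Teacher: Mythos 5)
Your proof is correct, and for the substantive (``only if'') direction it takes a genuinely different, and cleaner, route than the paper. The paper works with the original real matrix entries: from well-givenness it deduces $\M_{22}\neq 0$, rebuilds out of the first row of (\ref{FFomMFFp}) the manifestly symmetric--symplectic system (\ref{FFomMFFpp}), then extracts from the two systems the two chiral relations (\ref{FpG0}) and (\ref{FpG1}), $F^+=P\,G^+$ and $F^+=Q\,G^+$, and concludes $P=Q$ on shell, which is the on-shell identity (\ref{MSYMM}). You instead linearize the Hodge star from the outset: projecting (\ref{FFomMFFp}) onto the self-dual sector collapses the twelve real equations into the single complex eigenvector equation $J\M\, v=-i\,v$, with $J=\big(\begin{smallmatrix}0&-1\\1&\ 0\end{smallmatrix}\big)$ and $v=(F^+,G^+)^t$; well-givenness then becomes the statement that the second row is implied by the first, and the substitution yields one complex scalar identity whose real and imaginary parts are exactly $\det\M=1$ and $\M_{12}=\M_{21}$. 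Your spectral observation --- a real $2\times 2$ matrix $J\M$ admitting the non-real eigenvalue $-i$ must have $\mathrm{tr}(J\M)=\M_{12}-\M_{21}=0$ and $\det(J\M)=\det\M=1$ --- explains conceptually why symmetry and symplecticity, and nothing else, are forced; this packaging is absent from the paper and is illuminating. Both proofs then use Proposition \ref{propos2} in the same way for the converse direction (symmetric and symplectic on shell $\Rightarrow$ $\M=\M(\N)$ on shell $\Rightarrow$ equivalence with (\ref{GNN}), hence well given). What the paper's longer route buys is the explicit on-shell form (\ref{MSYMM}) of $\M$, which directly exhibits the parametrization needed for that last step.

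One point where the paper is more careful than you: spacetime regions where the fields vanish. Your step ``the bracket must vanish'' uses $F^+\neq 0$, so by itself it establishes symmetry and symplecticity on shell only at points where $F\neq 0$. The paper faces the same issue (its $P=Q$ argument needs $G^+\neq 0$) and closes it by invoking locality of $\M[F,G,\la]$, treating regions of vanishing field separately. A sentence doing the same --- e.g.\ noting that $\M$ is a local functional, so its value at vanishing fields can be reached by continuity from constant-field solutions, on which your argument applies --- would complete your proof; as it stands this is a minor, fillable gap of exactly the kind the paper itself must address.
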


\begin{proof}
i) Let $\M[F,G,\la]=-1+O(\la)$ be symmetric and
symplectic on shell of  (\ref{FFomMFFp}).
Then, because of Proposition \ref{propos2}, there exists a unique $\N[F,G,\la]=-i+O(\la)$ such that $\M[F,G,\la]=\M(\N)$ on
shell of  (\ref{FFomMFFp}). Hence (\ref{FFomMFFp}) is 
equivalent to  (\ref{GNN}) and therefore gives well defined
constitutive relations. 
\\
ii) If the constitutive relations  (\ref{FFomMFFp})
are a set of 6 independent relations that determine $G$ in terms of
$F$ then the matrix entry $\M_{22}\not=0$ (because otherwise from  (\ref{FFomMFFp}),
we would have $\HF=-\M_{21}F$ that constraints the independent
fields $F$). 
It follows that  (\ref{FFomMFFp}) is equivalent to
$G=-\M_{22}^{-1}\M_{21} F-\M_{22}^{-1}\HF$, i.e.  to
$\HG=\M_{22}^{-1} F -\M_{22}^{-1}\M_{21} \HF$.
Repeating the argument that lead from (\ref{GNN}) to (\ref{FFomMFF})  
we conclude that  (\ref{FFomMFFp})  is equivalent to the 
equations  
\eq\label{FFomMFFpp}
\left(\begin{array}{c}
\HF \\
\HG
\end{array}\right)=
\left(\begin{array}{cc}
0 & -1\\
1 & 0
\end{array}\right)
\left(\begin{array}{cc}
\M_{22}^{-1}+\M_{22}^{-1}\M_{21}^2 \:& \M_{21}\\
\M_{21} & \M_{22}
\end{array}\right)
\left(\begin{array}{c} 
F  \\
G
\end{array}\right)~.
\en
We show that on shell of the relations  (\ref{FFomMFFp})  the matrix
$\M[F,G,\la]$ is symmetric and symplectic because
\eq\label{MSYMM}
\M[F,G,\la]=
\small{\left(\begin{array}{cc}
\M_{22}^{-1}+\M_{22}^{-1}\M_{21}^2 \:& \M_{21}\\
\M_{21} &\ M_{22}
\end{array}\right)}~~~~\mbox{ {\sl (on shell)}}.
\en
Since by hypothesis the relations (\ref{FFomMFFp})  determine $G$ in terms of $F$ and $\HG= - F +O(\la)$,
we can also determine $F$ in terms of $G$ as a power series in $\la$.
Then (\ref{FFomMFFp})  is also equivalent to $\HG=\M_{11} F+\M_{12}
G$ and, observing that independence of the $G$ fields implies that the matrix
entry $\M_{11}\not=0$,
we conclude that  (\ref{FFomMFFp})  is as well  equivalent to
$F=\M^{-1}_{11}\HG -\M^{-1}_{11}\M_{12}G$, that we rewrite as 
\eq\label{FpG0}
F^+=P\,G^+~,~~P\equiv (-\M^{-1}_{11}-i\M^{-1}_{11}\M_{12})~.
\en
Similarly  (\ref{FFomMFFpp})  is also equivalent to its second row, 
$\HG=(\M^{-1}_{22}+\M_{22}^{-1}\M_{21}^2) F+\M_{21} G$ that we
rewrite as
\eq\label{FpG1}
F^+=Q\,G^+~,~~Q\equiv \Big(-(M^{-1}_{22}+M_{22}^{-1}M_{21}^2)^{-1}-i(M^{-1}_{22}+M_{22}^{-1}M_{21}^2)^{-1}M_{21}\Big) ~.
\en
Independence of the fields $G^+$ implies that subtracting (\ref{FpG1}) to (\ref{FpG0})  we obtain that $P-Q=0$ in
each region of spacetime where $G^+\not=0$. Moreover $P-Q=0$ in
each region of spacetime where $G^+=0$ because $G^+=0$ in that region implies
$P=1$ and $Q=1$ in that same region (we consider $\M[F,G,\la]$ to be a
local functional of $F$ and $G$). This shows the on shell equality $P=Q$.
Then equality (\ref{MSYMM}) immediately follows.
\end{proof}

\sk

\begin{Note} \label{Note5}
We have assumed that the constitutive relations can 
be written as power series expansions in $\la$. We here relax this 
assumption and consider constitutive relations (\ref{GNN}) such that
$\N[F,G,\la]=-i$ (or $\M[F,G,\la]=-1$) 
for the field configuration $F=G=0$;
this is equivalent to state that for weak and slowly varying fields
$\HG\approx -F$ (i.e., that in this regime the constitutive relations
are those of usual electromagnetism). Then
applying the implicit function theorem to the constitutive relations (\ref{GNN}) 
we know that  there exists  neighbourhoods of the field
configurations $F=0$, $G=0$ such that (\ref{GNN}) are equivalent to
the explicit expressions $G=G[F,\la]$ and  $F=F[G,\la]$.
The result of this section therefore holds also without the power series expansion
in $\la$ assumption: just consider fields sufficiently weak and slowly varying.
\end{Note}
\subsection{Complex variables}
As in Section \ref {complexvariables} it is fruitful to consider the
complex variables $T=F-iG$,  $\bar{T}=F+iG$.
The transition from the real to the complex variables is given by the
symplectic and unitary matrix $\A^t$
where
\eq
{\cal A}={1\over \sqrt{2}}\left(
\begin{array}{cc}
1  & 1\\
-i & i
\end{array}\right)~~,~~~~{\cal A}^{-1}={\cal A}^\dagger~. \label{defAAm1}
\en
The equation of motions in these variables read $dT=0$, with constitutive
relations obtained applying the matrix ${\cal A}^t$
to (\ref{FFomMFFp}):
\eq\label{TRT}
\left(\begin{array}{c}
\HT \\
\HbT
\end{array}\right)=
-i\left(\begin{array}{cc}
1 & \,0\\
0 & -1
\end{array}\right)\, {\cal A}^t \M \overline{\cal A}
\left(\begin{array}{c} 
T  \\
\bar T
\end{array}\right)~,
\en
where $ {\cal A}^t \M \overline{\cal A}$, on shell of (\ref{TRT}),  is
 complex symplectic and pseudounitary  w.r.t the metric 
$\big({}^1_0{}^{~0}_{-1}\big)$, i.e. it belongs to $Sp(2,\cc)\cap
U(1,1)=SU(1,1)$. It is also Hermitian and negative definite.
These properties uniquely characterize the matrices $ {\cal A}^t \M \overline{\cal A}$ 
as the  matrices 
\eq
\left(\begin{array}{cc}
 -\sqrt{1+\tau\bar \tau^{}} & \, -i \tau\\
i \bar \tau & -\sqrt{1+\tau\bar \tau^{}}
\end{array}\right)
\en
where $\tau=\tau[T,\bar T]$ is a complex field that depends on $T$,
$\bar T$ and possibly also their derivatives. 
We then see that the constitutive relations (\ref{TRT}) are equivalent
to the equations
\eq\label{TccT}
\HT{}_{\!\mu\nu}=i \sqrt {1+\tau\bar \tau} \,T_{\mu\nu} -\tau\,\bar T_{\!\mu\nu}
\en
Notice that if ${\cal
M}=-1+O(\la)$ (or equivalently ${\cal N}=-i+O(\la)$), then
$\tau=O(\la)$. In particular electromagnetism is obtained setting $\tau=0$.

\sk
In conclusion equations (\ref{TccT}) 
are the most general way of writing six independent real equations
that allow to express $G=\frac{i}{2}(T+\bar T)$ in terms of
$F=\frac{1}{2}(T+\bar T)$ as in (\ref{GNN})
(equivalently $F$ in terms of $G$).
These constitutive relations
 are determined by a
 complex function ${\cal N}$ (depending on $F,G$ and their derivatives
 ${\cal N}={\cal N}[F,G]$) or
 equivalently $\tau$
(depending on $T, \bar T$ and their derivatives $\tau=\tau[T,\bar T]$).

\section{Schr\"odinger approach to self-duality conditions\label{SCH}}
In the previous section we have clarified the structure of the
constitutive relations for an arbitrary nonlinear theory of
electromagnetism. The theory can also be with higher
derivatives of the field strength because the complex
field ${\cal N}$, or equivalently the matrix $\cal M$ in
(\ref{FFomMFFp}) of (pseudo)scalar entries, can depend also on
derivatives of the electric and magnetic field strengths $F$ and $G$.

We now further examine the constitutive relations for theories that satisfy the
NGZ self-duality condition (\ref{NGZ}), i.e., $\overline T\widetilde T=0$,
or equivalently,
\eq
\overline T\HT=0~.\label{TGZ}
\en
The constitutive relations (\ref{TccT}) 
determine the dependence of the magnetic field strength $G$ form the
electric one $F$ or vice versa. We notice that this dependence is determined also if
we constrain the fields in (\ref{TccT}) to satisfy the condition
$T\HT\not=0$. This is so because the set of field
configurations satisfying
$T\widetilde T\not=0$ is dense in the set of unconstrained field configurations.
Hence  if we multiply or divide  the constitutive relations (\ref{TccT})
by ${T\widetilde T}$ we obtain a set of equivalent constitutive
relations.
Having explained why we can freely divide by $T\widetilde T$ we can state the
following  
\sk
\begin{Proposition}\label{propos4} The constitutive relations  (\ref{TccT})
and the self-duality conditions (\ref{TGZ}) are equivalent to 
defining a nonlinear and higher derivatives extension of usual
electromagnetism by the relations
\eq\label{1/TT}
\HT{}_{\!\mu\nu}=-\frac{T^2}{T \HT} T_{\mu\nu}-\tau \bar T_{\mu\nu}~,
\en
that henceforth we call self-dual constitutive relations in
Schr\"odinger  variables.

Equivalently we have the  self-dual constitutive
relations
\eq\label{sfce1}
\HT{}_{\!\mu\nu}=-\frac{T^2}{T \HT } T_{\mu\nu}-\frac{T\bar
  T}{\overline{T\HT}} \bar T_{\mu\nu}~,
\en
\eq\label{sfce2}
{T\bar T} =r \,| T\HT|
\en
where the second equation is a scalar equation where $| T\HT|$ is
the modulus of  $T\HT$  and $r$ is a
dimensionless scalar field that depends on $T,\bar T$ and their
derivatives, that takes values in the non-negative real number and
that is duality invariant.
\end{Proposition}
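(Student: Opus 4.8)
The plan is to treat both proposed relations on the same footing. Equations (\ref{TccT}) and (\ref{1/TT}) are both instances of the single ansatz $\HT{}_{\!\mu\nu}=\beta\,T_{\mu\nu}-\tau\,\bar T_{\mu\nu}$ with a complex scalar coefficient $\beta$: in (\ref{TccT}) one has $\beta=i\sqrt{1+\tau\bar\tau}$, while in (\ref{1/TT}) one has $\beta=-T^2/(T\HT)$. Throughout I would contract this ansatz with $T^{\mu\nu}$, $\bar T^{\mu\nu}$ and $\HT^{\mu\nu}$, using only that the $\ast$-operator squares to $-1$ (so $\HT{}_{\!\mu\nu}\HT^{\mu\nu}=-T^2$) and that $T\bar T=F^2+G^2$ is real.

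I would first prove the equivalence of [(\ref{TccT}) and (\ref{TGZ})] with (\ref{1/TT}). Contracting the ansatz with $\HT^{\mu\nu}$ gives $-T^2=\beta\,(T\HT)-\tau\,(\bar T\HT)$. Imposing self-duality (\ref{TGZ}), i.e. $\bar T\HT=0$, this reduces to $-T^2=\beta\,(T\HT)$, hence $\beta=-T^2/(T\HT)$, which turns (\ref{TccT}) into (\ref{1/TT}). For the converse I would contract (\ref{1/TT}) with $\HT^{\mu\nu}$: since there $\beta\,(T\HT)=-T^2$ identically, the same contraction collapses to $0=-\tau\,(\bar T\HT)$, so $\bar T\HT=0$ for $\tau\neq0$, recovering (\ref{TGZ}). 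To see that (\ref{1/TT}) is genuinely of the form (\ref{TccT}), I would apply the Hodge operator once more to the ansatz and re-use it together with its complex conjugate, noting $\HbT=\overline{\HT}=\bar\beta\,\bar T-\bar\tau\,T$; with $\ast\ast=-1$ this yields $-T=(\beta^2+\tau\bar\tau)\,T-\tau(\beta+\bar\beta)\,\bar T$. Since $T$ and $\bar T$ are pointwise independent for generic configurations, matching coefficients forces $\beta+\bar\beta=0$ and $\beta^2=-(1+\tau\bar\tau)$, i.e. $\beta=i\sqrt{1+\tau\bar\tau}$ (the sign fixed by the limit $\tau\to0$). Thus $\beta$ is purely imaginary and (\ref{1/TT}) is exactly (\ref{TccT}).

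I would then recast (\ref{1/TT}) as (\ref{sfce1})--(\ref{sfce2}). Contracting (\ref{1/TT}) with $\bar T^{\mu\nu}$ and using $\bar T\HT=0$ gives $\beta\,(T\bar T)=\tau\,\overline{T^2}$. Writing $\beta=-T^2/(T\HT)$, so that $\overline{T\HT}=-\overline{T^2}/\bar\beta$, and using that $\beta$ is purely imaginary ($\bar\beta=-\beta$) together with the reality of $T\bar T$, this is equivalent to $\tau=T\bar T/\overline{T\HT}$; substituting into (\ref{1/TT}) produces precisely (\ref{sfce1}). Finally I would set $r:=T\bar T/|T\HT|$. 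This is real by the reality of $T\bar T$, and duality invariant because under (\ref{rotFG}) both $T$ and $\HT$ scale by $e^{-i\al}$ while $\bar T$ scales by $e^{i\al}$, so $T\bar T$ and $|T\HT|$ are separately invariant; this is (\ref{sfce2}). Conversely (\ref{sfce1}) is the special case $\tau=T\bar T/\overline{T\HT}$ of (\ref{1/TT}) and (\ref{sfce2}) merely names the invariant scalar $r$, so the two descriptions carry the same content.

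I expect the main obstacle to be the step establishing that the coefficient $-T^2/(T\HT)$ is purely imaginary on shell, equivalently that the two independent contractions are mutually consistent; this is exactly where self-duality enters essentially (it is the statement that the phases of $T^2$ and $T\HT$ differ by $\pi/2$), and it is what makes the clean form $\tau=T\bar T/\overline{T\HT}$ possible. A secondary subtlety is bookkeeping of the constitutive data: the complex $\tau$ appears to carry two real functions whereas $r$ carries only one. The resolution, most transparent in the chiral basis where the ansatz forces $T^+$ proportional to $\overline{T^-}$, is that self-duality fixes the phase of $\tau$ in terms of the field bilinears and leaves $r=|\tau|$ as the single genuine datum. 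One should also check $r\geq0$; this follows from the negative-definiteness of $\M$ recorded in Proposition \ref{propos2} (equivalently from positivity of the energy), which selects the physical branch on which $T\bar T\geq0$.
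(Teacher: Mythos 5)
Your strategy---contracting the single ansatz $\HT{}_{\!\mu\nu}=\beta\,T_{\mu\nu}-\tau\bar T_{\mu\nu}$ with $T$, $\bar T$, $\HT$, $\HbT$---is essentially the paper's, and your forward direction and derivation of (\ref{sfce1}) are sound. But there is a genuine gap in the converse direction: you never handle $\tau=0$. Your contraction of (\ref{1/TT}) with $\HT{}^{\mu\nu}$ gives only $\tau\,(\bar T\HT)=0$, and you conclude $\bar T\HT=0$ ``for $\tau\neq0$''; likewise your coefficient-matching step really gives $\tau(\beta+\bar\beta)=0$ and $\beta^2+\tau\bar\tau=-1$, so the conclusion $\bar\beta=-\beta$ again needs $\tau\neq0$. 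The case $\tau=0$ is not a removable corner: it is exactly Maxwell theory, $r=0$ in (\ref{sfce2}) (cf. Example \ref{EEEEX}), so as written your proof does not establish the claimed equivalence for it. The paper treats it separately: for $\tau=0$ relation (\ref{1/TT}) reads $\HT=-\frac{T^2}{T\HT}T$, whose contraction with $T$ gives $(T\HT)^2=-(T^2)^2$, hence $T=\pm i\HT$ (i.e. $F=\pm\HG$), which in turn forces $\bar T\HT=0$. You need to supply this (or an equivalent) argument.

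A second weak point is your justification of $r\geq 0$. Negative definiteness of $\M$ (Proposition \ref{propos2}), i.e. $\N_2<0$, does not fix the sign of $T\bar T=F^2+G^2$, which is what $r\geq0$ amounts to in view of (\ref{sfce2}): for instance Born--Infeld with $\la\to-\la$ still satisfies the self-duality condition but has $T\bar T=-\frac{|\la|}{8}|T\HT|^2<0$. The paper does not derive non-negativity from $\M$ at all; it simply defines $r\equiv|\tau|$ (so $r\geq0$ holds by construction) and uses (\ref{tauTT}) to align the phase of $\tau$ with that of $T\HT$, the choice of branch ($T\bar T\geq0$ rather than $\leq0$) being built into the parametrization $\tau=r\,T\HT/|T\HT|$. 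On the other hand, your replacement of the paper's second contraction (with $T_{\mu\nu}$, which yields $1+\tau\bar\tau=-(T^2)^2/(T\HT)^2$ and hence $\beta=i\sqrt{1+\tau\bar\tau}$) by a double application of the Hodge operator plus matching of the coefficients of $T$ and $\bar T$ is a legitimate alternative for $\tau\neq 0$, though it rests on pointwise linear independence of $T$ and $\bar T$ for generic configurations---a density argument of the same kind the paper invokes when justifying division by $T\HT$.
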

\begin{proof}
Contracting the indices of (\ref{TccT}) with $\HT{}_{\mu\nu}$ we obtain
\eq
-T^2=i\sqrt{1+\tau\bar \tau} \,T\HT ~.\label{T2sq}
\en
 Hence  the self-duality condition (\ref{TGZ}) (i.e. (\ref{NGZ})), and the constitutive
 relations (\ref{TccT})  imply (\ref{1/TT}).
\sk
Vice versa (\ref{1/TT}) implies   (\ref{TGZ})  and (\ref{TccT}). Indeed, contracting (\ref{1/TT}) with $\HT{}_{\!\mu\nu}$ we
obtain $\overline T \HT=0$. This is trivially the case if $\tau\not=0$.
It holds also if $\tau=0$ because then  (\ref{1/TT}) reads 
$
\HT=-\frac{T^2}{T\,\,\,T_{}^{{}^{\!\!\!\!\!\!\;\!\!\!\ast}~}} T
$
, i.e.,
$
(T\HT)^2=-T^2T^2
$
that implies 
$T=\pm i \HT$, i.e., $F=\pm\HG$. This last relation implies the
self-duality condition  (\ref{TGZ}).

In order to show that  (\ref{1/TT}) implies  (\ref{TccT}) first we contract 
 (\ref{1/TT}) with
$\HbT{}_{\mu\nu}$, and obtain
\eq\label{tauTT}
{T\bar T}=\tau\, \overline{T \HT}~.
\en
Then we contract (\ref{1/TT}) with $T_{\mu\nu}$, and  obtain
\eq
T\HT=-\frac{T^2 }{T\HT} T^2 -\tau T\overline T~.\en
This expression and the complex conjugate of (\ref{tauTT}) imply
$1+\tau\overline \tau=-\frac{T^2 T^2}{{(T\,\,\,T_{}^{{}^{\!\!\!\!\!\!\;\!\!\!\ast}~})}^2}$, and hence
$-\frac{T^2}{T\,\,\,T_{}^{{}^{\!\!\!\!\!\!\;\!\!\!\ast}~}}=i\sqrt{1+\tau\overline \tau}$, that
substituted in (\ref{tauTT}) gives (\ref{TccT}), as was to be
proven. The sign of
the square root $\sqrt{1+\tau\overline \tau}$ is determined considering  the limit $\la\to 0$,
where we want to recover usual electromagnetism, that in these variables
reads $\HT=i T$.
\sk

The self-duality condition $\overline T
\HT=0$ implies  (\ref{tauTT}) that fixes the phase of $\tau$ to equal  that
of $T\HT$. This
constraint is automatically satisfied by setting $r=|\tau|$ and
\eq\label{defr}
\tau= r\, \frac{T\HT}{|T\HT|}~.
\en
The equivalence of (\ref{1/TT}) with the self-dual constitutive relations
(\ref{sfce1}), (\ref{sfce2}) is then immediate.
Trivially $r\geq 0$. Finally, recalling that $F$ and $\HG$ are
tensors while $\HF$ and $G$ are pseudo-tensors  we easily
check that $T\overline T$ and $T\HT\;\overline{T\HT}$ are  scalars, hence
$r$ is a scalar field depending on $T,\overline T$ and their
derivatives (i.e., $r$ is invariant under orientation reversal).
Duality invariance of $r$ (under $T\to e^{-i\al} T$) immediately follows from 
(\ref{sfce2}).
\end{proof}
In the self-duality conditions (\ref{sfce1}),  (\ref{sfce2}) we have
been able to disentangle the general relations that a self-dual
theory must satisfy, i.e.,  (\ref{sfce1}), from the specific condition
that defines the nonlinear theory: the scalar equation  (\ref{sfce2})
that determines the ratio $T\overline T/|T\HT|$.
Nonlinear self-dual theories are defined by imposing that this ratio equals an
arbitrary duality invariant real and nonnegative 
scalar function $r$ of $T,\overline T$ and their derivatives. 
\begin{Example}\label{EEEEX}
{\sl Linear electromagnetism} ($G=\HF$) corresponds to the case $r=0$. Indeed
$T\overline T=0$ in linear electromagnetism, while $T\HT=2F\HF+2iF^2$ is
arbitrary. \\
{\sl Born-Infeld nonlinear theory} satisfies the constitutive
relations, $\la\;{\bar
  T}^{\!\!\!\!\!\!\!\!\!\ast\,~~}{}^{\!\mu\nu}=\frac{\partial}{\partial
T_{\mu\nu}}\big(\frac{4\,T^2}{T\,\,\,T_{}^{{}^{\!\!\!\!\!\!\;\!\!\!\ast}~}}\big)$, i.e.,  
\eq
\HT{}_{\!\mu\nu}=-\frac{T^2}{T\HT} T_{\mu\nu}-\frac{\la}{8}(T\HT)\,\overline T_{\mu\nu}\label{BIconst}
\en
as remarked by Schr\"odinger \cite{Schrodinger}, see \cite{GZS} for a clear
account in nowadays notations. Comparison with (\ref{sfce1}) and (\ref{sfce2}), 
shows that  Born-Infeld theory is determined by
\eq
r=\frac{\la}{8} |T\HT|~.\label{rBI}
\en
\end{Example}
\sk
We gain further insights in the self-dual constitutive relations by
analyzing the phases and moduli of the scalars fields that enter
(\ref{sfce1}) and (\ref{sfce2}).
Relation (\ref{T2sq}) implies that the
phase of $T\HT$ is bigger than the phase of $T^2$ by a
$\pi/2$ angle. In polar coordinates we have,
 \eq\label{fff}
T^2=|T^2| e^{i\varphi}~,~~T\HT =i|T\HT| e^{i\varphi}~.
\en
Use of (\ref{tauTT}) leads to the relation
$\tau\overline\tau=
{|T\overline T|^2}/{|T\HT|^2}$, that inserted in  (\ref{T2sq}) gives\footnote{This equation suggests to set $\,\cosh\beta={\rho_{T^2}}/{\rho^{}_{{T^{\,\ast\,}{\!}T}}}\,$,
$\,\sinh \beta ={\rho_{T\overline T}}/{\rho^{}_{{T^{\,\ast\,}{\!}T}}}\,$, 
so that (\ref{rrr}) is automatically satisfied. With these
 variables the constitutive relations read
$\,
\HT=i \cosh \beta\, T-i\, \sinh\beta
\frac{T^2}{\rho_{T^2}}\, \overline T
$.
Different nonlinear theories are determined by the dependence of the
angle $\beta$ from the fields $T,\overline T$ and their derivatives.} 
\eq
|{T^2}|^2=|{T\HT}|^2+|{T\overline T}|^2~.
\label{rrr}
\en
\subsection{Chiral variables\label{cfr}}

The self-dual constitutive relations further simplify when we rewrite them
in term of the chiral variables $T^+,T^-$ and their complex conjugates.

We consider the Hodge dual of equation (\ref{sfce1}), sum it to $\pm i$-times
(\ref{sfce1}), and, with the help of (\ref{TPM}) and (\ref{OTPM}), we
obtain the equivalent relations
\eq
T^\pm_{\;\mu\nu}=-\frac{T\overline T}{{2 T^\mp}^2}\,
\frac{T\HT}{\overline {T\HT}} \:{\overline {T^\mp}}^{}_{\!\!\mu\nu}\label{tpmtbpm}
\en
where $2{{T}^{\,\mp}}^2=T^2\mp i T\HT=(|T^2|\pm |{T\HT}|)e^{i\varphi} $. Further 
use of the phase relations (\ref{fff}) leads to 
$T^\pm_{\,\mu\nu}=\frac{T\overline
  T}{{2\overline{T^\mp}^2}} \,\overline{T^\mp}^{}_{\!\!\mu\nu}$, i.e., to
\eq
T^+_{\;\mu\nu}=t \,e^{i\varphi}\: \overline{T^-}^{}_{\!\!\mu\nu}~,\label{tp}
\en
and $
T^-_{\;\mu\nu}=t^-  e^{i\varphi}\:{\overline{T^+}}^{}_{\!\!\mu\nu}$,
where the dimensionless,  nonnegative and duality rotation invariant scalar fields $t$ and $t^-$ are defined by
\eq
t\equiv \frac{T\overline T}{|T^2| + |T\HT|}~,\label{ttt}
\en
and $t^-\equiv \frac{T\overline T}{|T^2| - |T\HT|}$.
Equations $T^-_{\,\mu\nu}=t^- e^{i\varphi}\:{\overline {T^+}}^{}_{\!\!\mu\nu}$ are equivalent
to  $T^+_{\,\mu\nu}=t e^{i\varphi}\:{\overline {T^-}}^{}_{\!\!\mu\nu}$ because, due to
(\ref{rrr}), $t^-=t^{-1}$.
\sk

The scalar equation  (\ref{sfce2}) determines the value of the ratio
$T\overline T/|T\HT|$. Because of the moduli relation
(\ref{rrr}), it equivalently determines the ratio $t$ in (\ref{ttt}). 
Therefore, as in the previous section (see paragraph after the proof of Proposition \ref{propos4}),  we can conclude that 
(\ref{tp}) is the general relation that a self-dual theory must
satisfy, while  the specific condition that
defines the nonlinear theory is the dependence of the real
nonnegative duality invariant scalar function $t$ from a set of independent variables
and their derivatives, for example $T^-$ and $\overline{T^-}$.
\sk

It is useful to present the explicit relation between the ratio $r=T\overline T/|T\HT|$
and $t$.
 We calculate 
\eq|{T^-}^2|(1-t^2)=\frac{1}{2}(|T^2|+|T\HT|)(1-t^2)
=|T\HT|\label{useful}~,
\en 
multiply this last equality by $r=T\overline
T/|T\HT|$ and obtain
\eq
(1-t^2)r=2t~.\label{trrel}
\en

\section{Nonlinear theories without higher derivatives\label{nhd}}

If the constitutive relations
$G_{\mu\nu}=h_{\mu\nu}[F,\la]$ (see (\ref{maxwww})) do not involve derivatives of the fields
then, as noticed in the introduction,  any antisymmetric 2-tensor is a linear combination of
$F_{\mu\nu}$ and $\HF{}_{\!\mu\nu}$ with coefficients that are
(pseudo)scalar functions of  $F_{\mu\nu}$. Hence the constitutive
relations (\ref{GNN}) or (\ref{FFomMFF}) are the most general 
ones. Furthermore, if we are in  Minkowski spacetime
Lorentz invariance implies that 
the field $\cal N$ in
(\ref{GNN}) and the matrix $\cal M$ in (\ref{FFomMFF}) can be expressed in terms of  
the Lorentz invariant combinations $F^2$ and $(F\HF)$.
Similarly, if we choose the chiral fields $T^-$ and $\overline{T^-}$ as independent
variables (cf. Sections \ref{ITTm} and \ref{cfr}) then any Lorentz
invariant field is a function of ${T^-}^2$ and ${\overline{T^-}}^{\,2}$ . 

More in
general we consider theories in curved spacetime that depend only on
$F^2$ and $F\HF$, or ${T^-}^{2\!}$ and $\overline{T^-}^{\,2}$.
 Since the
action functional ${\cal I}[T^-,\overline{T^-}]$ studied in Section
\ref{ITTm} and the 
scalar field $t$ defined in (\ref{ttt}) are duality invariant, and under a duality of angle $\al$ we have the
phase rotation  ${T^-}^2\to e^{2i\al}{T^-}^2$, we conclude that ${\cal
I}$ and $t$ depend only on the modulus of ${T^-}^2$, hence 
${\cal I}={\cal I}[T^-,\overline{T^-}]$ and $t=t[T^-,\overline{T^-}]$ simplify to
\eq
{\cal I}=\frac{1}{\la}\int\!d^4x\sqrt{g}\:{I}(\uuu) ~,~~t=t(\uuu)~,
\en
where $I(\uuu)$ is an adimensional scalar function, and the variable
$\uuu$ is defined by

\eq
\uuu_{\,}\equiv_{\,} 2\la|{T^-}^2|_{\,}=_{\,}\la(|T^2|+|T\HT|)~.
\en

\noi Similarly, the constitutive relations (\ref{Iconst}) simplify to 
\eq
{T^+}^{\mu\nu}=
%\frac{1}{\sqrt{g}}\frac{\delta {\cal
%    I}}{\delta{\overline{T^-}_{\!\!\mu\nu}}}=
\frac{1}{\la}\frac{\partial I}{\partial {\overline{T^-}_{\!\!\mu\nu}}}=
\frac{1}{\la}\frac{d I}{d \uuu}_{\,}\frac{\partial \uuu\,}{\partial {\overline{T^-}_{\!\!\mu\nu}}}~,
\en and comparison with  with (\ref{tp}) leads to
\eq
t={2} \frac{d {I}}{d \uuu}~.\label{Ioft}
\en
In fact, deriving $\uuu^2$ we obtain $\frac{\partial \uuu}{\partial
\overline{T^-}_{\!\!\!\mu\nu}}=2\la e^{i\varphi } \,\overline{T^-}_{\!\!\!\mu\nu}$ where
we used the same conventions as in footnote \ref{funo}, and that
${T^-}^2=|{T^-}^2|e^{i\varphi}$ (see expression immediately after
(\ref{tpmtbpm})). 
% We recall
% from (\ref{ttt}) 
%that $t\equiv \frac{T\overline T}{|T^2| + |T\HT|}$.

\subsection{Born-Infeld nonlinear theory}
In this section we determine the scalar field $t=t(\uuu)=2\frac{d I}{d u}$ in case of
Born-Infeld theory. This is doable thanks to Schr\"odinger's
formulation (\ref{BIconst}) of Born-Infeld theory, that explicitly
gives $r=\frac{\la}{8}|T\HT|$, see (\ref{rBI}). 
Then from (\ref{useful}) we have 
\eq
r=\frac{1}{16}\uuu(1-t^2)~,
\en
and recalling (\ref{trrel}) we obtain \cite{AFunp, IZ}
\eq
(1-t^2)^2 \uuu= 32 t~. \label{poleq4}
\en
Now in the limit $\uuu\to 0$, i.e., $\la\to 0$, we see from (\ref{ttt}) that 
$t\to 0$. 
The function $t=t(\uuu)$ defining Born-Infeld theory
 is therefore given by the unique positive root of the fourth order
polynomial equation (\ref{poleq4}) that has the correct $\la\to 0$
limit. Explicitly,
\eq
t=\frac{1}{\sqrt{3}}\Big(\sqrt{1+s+s^{-1}} - \sqrt{2-s-s^{-1}
+\frac{24\sqrt{3}}{\uuu\sqrt{1+s+s^{-1}}}}~\,\Big) ~,\label{radical1}
\en
where
\eq
s=\frac{1}{\uuu} \Big(216_{\,} \uuu +12 \sqrt{3}\sqrt{108+\uuu^2}_{\,} \uuu+ \uuu^3\Big)^{\mbox{$\frac{1}{3}$}}~.
\label{radical2}\en

\subsection{The hypergeometric function and its hidden identity}
In \cite{CKR} the action functional $\cal I$ and the function $t(\uuu)$
corresponding to the Born-Infeld action  were found via
an iterative procedure order by order in $\la$ (or equivalently in $\uuu$).  The first
coefficients of the power series expansion of $t(\uuu)$ were recognized to
be those of a generalized hypergeometric function, leading to the conclusion
\eq
t(\uuu)=\frac{\uuu}{32} {\,}_3F_2\Big(\frac{1}{2}, \frac{3}{4}, \frac{5}{4};\,
  \frac{4}{3}, \frac{5}{3};\,-\frac{ \uuu^2}{3^3\cdot 2^2}\Big)~,\label{3F2}
\en
and, integrating (\ref{Ioft}),
\eq
{I}(\uuu)={6}\left(1-
 {\,}_3F_2\Big(-\frac{1}{2}, -\frac{1}{4}, \frac{1}{4},\,
  \frac{1}{3}, \frac{2}{3};\,-\frac{\uuu^2}{3^3\cdot 2^2}\Big)\right)~.
\en
We have checked that the expansion in power series of $\uuu$  of the closed form expression of
$t(\uuu)$ derived in (\ref{radical1}),(\ref{radical2}) coincides,  
up to order $O(\uuu^{1000})$  with $\frac{\uuu}{32}$ times the hypergeometric function  in (\ref{3F2}).
 Therefore we conjecture that the hypergeometric function in
 (\ref{3F2})
\eq
{\mathfrak F}(\uuu^2)\equiv{}_3F_2\Big(\frac{1}{2}, \frac{3}{4}, \frac{5}{4};\,
  \frac{4}{3}, \frac{5}{3};\,-\frac{\uuu^2}{3^3\cdot 2^2}\Big)=
{2}\sum_{k=0}^\infty\frac{(4k+1)!}{(3k+2)!k!}\Big(-\frac{\uuu^2}{4^5}\Big)^k
\en
has the closed form expression ${\mathfrak F}(\uuu^2)=\frac{32}{\uuu}t(\uuu)$ where $t(\uuu)$
is given in (\ref{radical1}),(\ref{radical2}), and, because of
(\ref{poleq4}), that it satisfies the ``hidden'' identity
\eq
{\mathfrak F}(\uuu^2)=\Big(1-\frac{\uuu^2}{4^5}{{\mathfrak F}(\uuu^2)}^2\Big)^2~.\label{qeq}
\en
 It is indeed this
identity that we have verified
up to $O(\uuu^{1000})$.
\subsection{General nonlinear theory}
Since Born-Infeld theory is singled out by setting
$r=\frac{\la}{8}|T\HT|$, and Maxwell theory by setting $r=0$
(cf. Example \ref{EEEEX}), it is
convenient to describe a general nonlinear theory without higher
derivatives by setting
\eq\label{fuuu}
r=\frac{\la}{8}|T\HT|f(\uuu)/\uuu
\en
 where $f(\uuu)$ is a positive
function of $\uuu$. We require the theory to reduce to
electromagnetism in the weak field limit, i.e.,
$\HG_{\mu\nu} =-F+ o(F)$ for $F\to 0$. Then we have $T^-={\cal O}(F)$,
$T^+=o(F)$, $u={\cal O}(F^2)$. Hence from (\ref{tp}) we obtain $\lim_{u\to 0} t=0$. Moreover from
(\ref{trrel}),  $r={\cal O}(t)$ and from $r=\frac{1}{16}f(u)(1-t^2)$
(that follows from (\ref{fuuu}) and (\ref{useful})\/) $f={\cal
  O}(t)$.
Hence the theory reduces to
electromagnetism in the weak field limit if and only if $\lim_{\uuu\to
  0}f(\uuu)=0$.\footnote{We further notice that  $\lim_{\uuu\to
  0}f(\uuu)=0$ implies $I(u)=o(u)$. In particular the theory defined
by $I(u)=u$ (or equivalently $f(u)=\frac{2^6}{3^2}$) does not reduce to
electromagnetism for weak fields.

In general, besides requiring that
the theory determined by $f(u)$ reduces to electromagnetism in the
weak field limit we can also require the theory to be analytic in $F$
(the lagrangian to have a power series expansion in $F$ around $F=0$). In
this case from (\ref{LegendreT}) and inverting relations (\ref{TUF}), or
more explicitly from (\ref{Scorrec}), we see that  the Legendre transformed
function $I(u)$ must depend on $u^2=4\la^2 { T^-}^2 {\overline{T^-}}^2$. Equivalently $f(u)/u$ must depend
on $u^2$.}

From  $r=\frac{1}{16}f(u)(1-t^2)$ (that follows from (\ref{fuuu}) and (\ref{useful})\/)  and (\ref{trrel}) we obtain that the composite function $t(f(\uuu))$ satisfies the fourth order polynomial equation 
\eq
(1-t^2)^2f(\uuu)=32 t~, \label{TFU}
\en
so that
$t(f(\uuu))$ is obtained with the  substitution $\uuu\to f(\uuu)$ in
(\ref{radical1}) and (\ref{radical2}), or in
(\ref{3F2}).

More explicitly, recalling the constitutive relation (\ref{1/TT}), we
conclude that the constitutive relations \`a la   Schr\"odinger 
\eq\label{cr5}
\HT{}_{\!\mu\nu}=-\frac{T^2}{T \HT} T_{\mu\nu}-\frac{\la}{8}\frac{f(\uuu)}{\uuu}_{\,}(T\HT)\,\bar T_{\mu\nu}~,
\en
are equivalent to the constitutive relations (deformed twisted
self-duality conditions)
\eq\label{cr6}
{T^+}^{\mu\nu}
=
\frac{1}{2\la} t(f(\uuu)) \,\frac{\partial \uuu}{\partial{\overline{T^-}_{\!\!\mu\nu}}}~,
\en
where $t(f(u))$ satisfies the quartic equation (\ref{TFU}), and we
recall that $\uuu = 2\la|{T^-}^{2}| = \la(|T^2|+|T\HT|)~.$ 
\sk
In other words the appearence of the quartic equation (\ref{TFU})  is a
general feature of the relation between the constitutive relations (\ref{cr5})
and (\ref{cr6}), it appears for any self-dual theory and it is not only a feature
of the Born-Infeld theory.

\sk\sk
\noi
{\large \bf Acknowledgements} \\
We thank W. Chemissany, R. Kallosh, T. Ortin and  M. Trigiante for
valuable correspondence during last fall.  In particular we thank R. Kallosh for discussions and her interest on the relation
between the hypergeometric function of her work \cite{CKR} and the quartic 
equation 
%(\ref{qeq})
expressing Born Infeld theory in duality invariant variables.
We also thank F. Bonechi for fruitful discussions.

The hospitality of CERN Theory Unit where the present work has been initiated is  gratefully acknowledged.
This work is supported by the ERC Advanced Grant no. 226455, Supersymmetry, Quantum Gravity and Gauge Fields (SUPERFIELDS).
\begin{appendix}\section{Examples of higher derivatives theories\label{HDactions}}
We construct  examples of higher derivatives $U(1)$ actions that define
self-dual theories.
These examples include the Bossard Nicolai one
in \cite{BN}; the actions we present are quadratic in the field strength.
Let
\eq
S=-\frac{1}{4}\int\!\! d^4x\sqrt{g} \; F O F\label{FOF}
\en
with $O$ a matrix 
$O_{ \mu\nu}^{~\rho\sigma}$   
of differential operators independent from $F$; explicitly
$F O F=
F^{\mu\nu}\,O_{ \mu\nu}^{~\rho\sigma} F_{\rho\sigma}\,.$ 
We recall that by definition the hermitian conjugate
operator $O^\dagger$ satisfies 
$\int (O^\dagger K)F=\int KOF$ for all antisymmetric and real tensors
$K$ and $F$.
Since $\int F O F =\int (O^\dagger F)  F = \int F (O^\dagger F)
%=\frac{1}{2}\int F(O+O^\dagger) F
$,  there is no restriction in
considering  $O$ hermitian,
i.e., $\int (OK) F=\int KOF$, or explicitly
$\int \!d^4x \sqrt{g} \,(O_{\rho\sigma}^{~\mu\nu}
K_{\mu\nu})F^{\rho\sigma}=
\int\! d^4x \sqrt{g}\, K^{\mu\nu}O^{~\rho\sigma}_{\mu\nu} F_{\rho\sigma}\,.$
Let $O$ also satisfy
\eq
 O\circ {}^{\ast}\circ O={}^{\ast}\label{OtOt}
\en
i.e., $O\,{}^{\ast}(OF)={}^{\ast}F$.
\sk
We show
that the action (\ref{FOF}) gives self-dual equations of motion
if $O$ satisfies (\ref{OtOt}).
Indeed in this case  the self-duality condition
(\ref{NGZ2}), i.e., 
$\int \!d^4 x \;F\widetilde F +G\widetilde G=0$, holds.  The proof is
easy. We first calculate
\eq\widetilde G^{\mu\nu}= 2
\frac{\delta S}{\delta F_{\mu\nu}}=-\sqrt{g}\,O^{\mu\nu\;\rho\sigma}
F_{\rho\sigma}~,\label{GOFcr}
\en
i.e., $\HG{}^{\,\mu\nu}= -(OF)^{\mu\nu}$. Hence
\eqa
\int \!d^4x\,\widetilde G G&=& \int \!d^4x\sqrt{g}\;(\HG)\,G=
-\!\int
\!d^4x\sqrt{g}\;{(\HG)}^{\:\ast}({\HG})\nn\\
&=&
-\!\int \!d^4x\sqrt{g}\, (OF)~{}^{\ast\!}(OF)=
-\!\int \!d^4x\sqrt{g}\, F O\;{}^{\ast}\!(OF)\nn\\
&=&
-\!\int \!d^4x\sqrt{g}\, F \HF=
-\!\int \!d^4x \,F \widetilde F~,
\ena
where in the fourth equality we used (\ref{OtOt}).

\sk
Examples of differential operators $O$ are given by considering  operators
$\Delta$ on antisymmetric tensors $F_{\mu\nu}$  that satisfy the
hermiticity condition $\Delta^\dagger=\Delta$ and that anticommute
with the $\ast$-Hodge operator,
\eq
{}^{\ast}\circ \Delta=-\Delta\circ {}^{\ast}~.
\en
Let's introduce a coupling constant $\la$ so that  $\la \Delta$ is adimensional, and let $f(\la \Delta)$ be an odd function in $\Delta$ (e.g.,  a polynomial,
or  a power series function like $\la\Delta$, $\la\Delta^3$,
$\sin(\la\Delta)$). Then ${}^{\ast}\circ
f(\la\Delta)=-f(\la\Delta)\circ {}^{\ast}$,
and 
\eq
O=
%\sqrt{-g}\,
\big(1-f(\la\Delta)\big)^{-1}\big(1+f(\la\Delta)\big)
\en
satisfies (\ref{OtOt}).

In particular, if $f(\la\Delta)=\la\Delta$ and if
$\Delta_{\mu\nu}^{~\rho\sigma} F_{\rho\sigma}=\nabla_{\kappa}
\big(T_{[\mu}^{~~\kappa\la[\sigma}\nabla_\la\delta_{\nu]}^{\rho]}F_{\rho\sigma}\big)$,
where the covariant derivatives are with respect to the Levi-Civita
connection,  $T^{\mu\kappa\la\sigma}$ is the
Bel-Robinson tensor, and the square brackets denote antisymmetrization
in the embraced indices, 
then we obtain the action of Bossard and Nicolai \cite{BN}.

\section{The action functional $S[F]$ from 
${\cal I}[T^-,\overline{T^-}]$\label{app1}}
We here determine the first two nontrivial terms  $S^{(1)}$ 
and  $S^{(2)}$  of the action $S$, see (\ref{Scorrec}) Section
\ref{ITTm}.

Since $S^{(0)}=-\frac{1}{4}\int \!d^4x\sqrt{g}\,F^2$ corresponds to 
${\cal I}^{[0]}=0$, we have (cf.  (\ref{al1})) , 
${T^+}^{(0)}=0\,,~{T^-}^{(0)}=2F^-\,,~{G^-}^{(0)}=iF^-$, and, for
$n\geq 1$, ${T^-}^{(n)}=-i{G^-}^{(n)}$,
$\overline{T^-}^{(n)}=i{G^+}^{(n)}$.
The following useful formula is then easily derived using the chain rule:
\eq\label{Imn}
\frac{\delta {\cal I}^{[m]}|_{F^\mp}^{~(n)}}{\delta
F^-}
=2\frac{\delta {\cal I}^{[m]}}{\delta
T^-}\Big|_{F^\mp}^{\;(n)}-2\sum_{p=m}^{n-1}\int\!\!d^4x\frac{1}{\sqrt{g}}\,
\frac{\delta {\cal I}^{[m]}}{\delta
T^-}\Big|_{F^\mp}^{\;(p)}\,\frac{\delta^2S^{(n-p)}}{{\delta
 F^-}{\delta F^-}}+
\frac{\delta {\cal I}^{[m]}}{\delta
\overline{T^-}}\Big|_{F^\mp}^{\;(p)}\,\frac{\delta^2S^{(n-p)}}{{\delta
 F^-}{\delta F^+}}
\en
where we have simplified the notation by setting
$\big|_{F^\mp}=\big|_{{{{}^{T^-[F^-,F^+]}_{\overline{T^-}[F^-,F^+]}}}}\,$
and omitting spacetime indices,
and where we have assumed that we know the action $S[F]$ up to order
$n-1$ so that, for all $p=1,2,\ldots n-1$, we have
$\mp {iG^\pm}^{(p)}=\frac{2}{\sqrt{g}}
\frac{\delta^2 S^{(p)}}{\del F^\pm}$,
and therefore 
$\frac{\delta {T^-}^{(p)}}{\delta
  F^-}=-i\frac{\delta {G^-}^{(p)}}{\delta F^-}=-\frac{2}{\sqrt{g}}
\frac{\delta^2 S^{(p)}}{\del F^-\del F^-}$.
\sk
If $m=n$ then the above formula simply reads 
$\frac{\delta {\cal I}^{[n]}|_{F^\mp}^{~(n)}}{\delta
F^-}
=2\frac{\delta {\cal I}^{[n]}}{\delta
T^-}\Big|_{F^\mp}^{\;(n)}$, and since 
${\cal I}^{[n]}|_{F^\mp}^{~(n)}={\cal I}^{[n]}[2F^-,2F^+]$ (use ${T^-}^{(0)}=2F^-$),
it 
simplifies to 
\eq\label{Inn}
\frac{\delta {\cal I}^{[n]}[2F^-,2F^+]}{\delta
F^-}
=2\frac{\delta {\cal I}^{[n]}}{\delta
T^-}\Big|_{F^\mp}^{\;(n)}~.
\en
Setting $n=1$ and recalling that since ${\cal I}^{[0]}=0$,
$\frac{\delta {\cal I}^{[1]}}{\delta
T^-}|_{F^\mp}^{\;(1)}=\frac{\delta {\cal I}\,}{\delta
T^-}|_{F^\mp}^{\;(1)}\,$, we immediately see
that  $S^{(1)}[F^-,F^+]=\frac{1}{4}{\cal I}^{[1]}[2F^-,2F^+]$
satisfies (\ref{recS}). 
\sk
In order to determine $S^{(2)}$ we first calculate
(using for example the chain rule in deriving w.r.t. $\la$)
\eqa\label{I12}
{\cal I}^{[1]}\big|_{F^\mp}^{\;(2)}&=&\int \!\!d^4x~\frac{\delta {\cal I}^{[1]}}{\delta
T^-}\Big|_{F^\mp}^{\;(1)}{T^-}^{(1)}
+\frac{\delta {\cal I}^{[1]}}{\delta\overline{T^-}}
\Big|_{F^\mp}^{\;(1)}
\overline{T^-}^{(1)}\nn\\[.4em]
&=&2\int \!\!d^4x~
\frac{\delta {S}^{(1)}}{\delta
F^-}(-iG^-)^{(1)}+
\frac{\delta {S}^{(1)}}{\delta {F^+}} (iG^+)^{(1)}\nn\\[.4em]
&=&-4\int \!\! d^4x \frac{1}{\sqrt{g}}~
\frac{\delta {S}^{(1)}}{\delta
F^-}\frac{\delta {S}^{(1)}}{\delta
F^-}+
\frac{\delta {S}^{(1)}}{\delta {F^+}}
\frac{\delta {S}^{(1)}}{\delta {F^+}}
\ena
where  in the second line we used $\frac{\delta {\cal I}^{[1]}}{\delta
T^-}|_{F^\mp}^{\;(1)}=\frac{\delta {\cal I}\,}{\delta
T^-}|_{F^\mp}^{\;(1)}\,$ and then
(\ref{recS}) at order $n=1$.
In the third line we used 
the constitutive relations (\ref{Sconst}), i.e., $G^-=-\frac{2i}{\sqrt{g}}\frac{\delta
  S}{\,\delta F^-}$ at order $n=1$, that we already know to be
implied by the chiral constitutive relations (\ref{Iconst}).

\sk
Next for notational simplicity we set
$\int=\int\!d^4x\frac{1}{\sqrt{g}}\,$ and we compute
\eqa
\frac{\delta {\cal I}^{}}{\,\delta
T^-}\Big|_{F^\mp}^{\;(2)}
&=&
\frac{\delta {\cal I}^{[2]}}{\delta
T^-}\Big|_{F^\mp}^{\;(2)}
+\frac{\delta {\cal I}^{[1]}}{\delta
T^-}\Big|_{F^\mp}^{\;(2)}\nn\\[.6em]
&=&
\frac{1}{2}\frac{{\cal I}^{[2]}[2F^-,2F^+]}{\delta F^-}
+\frac{1}{2}\frac{\delta{\cal I}^{[1]}|_{F^\mp}^{\;(2)}}{\delta F^-}
+\int
\frac{\delta {\cal I}^{[1]}}{\delta
T^-}\Big|_{F^\mp}^{\;(1)}\frac{\delta^2S^{(1)}}{\delta F^-\delta F^-}
+
\frac{\delta {\cal I}^{[1]}}{\delta
\overline{T^-}}\Big|_{F^\mp}^{\;(1)}\frac{\delta^2S^{(1)}}{\delta F^-\delta F^+}\nn\\[.6em]
&=&
\frac{1}{2}\frac{{\cal I}^{[2]}[2F^-,2F^+]}{\delta F^-}
+\frac{1}{2}\frac{\delta{\cal I}^{[1]}|_{F^\mp}^{\;(2)}}{\delta F^-}
+\frac{\delta}{\delta F^-}\int
\frac{\delta S^{(1)}}{\delta F^-}\frac{\delta S^{(1)}}{\delta F^-}+
\frac{\delta S^{(1)}}{\delta F^+}\frac{\delta S^{(1)}}{\delta
  F^+} \nn\\[.6em]
&=&\frac{\delta}{\delta F^-}\Big(\frac{1}{2}{\cal I}^{[2]}[2F^-,2F^+]-\int
\frac{\delta S^{(1)}}{\delta F^-}\frac{\delta S^{(1)}}{\delta F^-}+
\frac{\delta S^{(1)}}{\delta F^+}\frac{\delta S^{(1)}}{\delta
  F^+}\,\Big)\label{IdF}
\ena
where in the second line we have used (\ref{Inn}) and (\ref{Imn}), in
the third line we have noticed again that 
$\frac{\delta {\cal I}^{[1]}}{\delta
T^-}|_{F^\mp}^{\;(1)}=\frac{\delta {\cal I}\,}{\delta
T^-}|_{F^\mp}^{\;(1)}=2\frac{\delta S^{(1)}}{\delta F^-}$ (cf. (\ref{recS}), in the
fourth line we have used (\ref{I12}). From the equality (\ref{IdF}) we
see that  
$S^{(2)}=\frac{1}{4}_{\!\,}{\cal I}^{[2]}[2F^-,2F^+]-
\frac{1}{2}\!\int\frac{\delta S^{(1)}}{\delta F^-}\frac{\delta S^{(1)}}{\delta F^-}+
\frac{\delta S^{(1)}}{\delta F^+}\frac{\delta S^{(1)}}{\delta
  F^+}$ satisfies (\ref{recS}) with $n=2$.

\section{The energy momentum tensor and its trace \label{TTT}}
We first recall that the symmetric energy-momentum tensor $\theta^{\mu\nu}$
of a nonlinear electromagnetic theory
is  given by
\eq
\theta^{\mu\nu}=-\widetilde G^{\mu\la}F^{\nu}_{~~\la}+g^{\mu\nu\,}{\cal L} \label{emt}
\en
if the Lagrangian $L$ in the action  $S[F]=\int \!d^4x \,{\cal L}=\frac{1}{\la}\int \!d^4x \sqrt{g}\,
L$
depends on the field strength $F_{\mu\nu}$ and the metric $g_{\mu\nu}$
only via the invariant and dimensionless combinations
\eq
\al=\frac{\la}{4}F^2~~,~~~\be=\frac{\la}{4}F\HF\label{albe}~.
\en
Indeed we  compute 
\eq
\frac{\partial\al}{\partial g_{\mu\nu}}=-2\frac{\partial \al}{\partial
  F_{\mu\rho}}F^\nu_{~\;\rho}~~,~~~
\frac{\partial\be}{\partial g_{\mu\nu}}=-2\frac{\partial \be}{\partial F_{\mu\rho}}F^\nu_{~\;\rho}~,~~
\en
(where the factor $2$ is due to our $\frac{\partial}{\partial
  F_{\mu\rho}}$ conventions, cf. (\ref{Sconst}) and its footnote); for
the second equation we used
$\frac{\partial\sqrt{g}^{-1}}{\partial g_{\mu\nu}}=-\sqrt{g}^{-1}g^{\mu\nu}$,
and the property $\HF^{\mu\la}
F_{\nu\la}=-\frac{1}{4}\delta^{\mu}_{~\nu\,}\HF^{\rho\sigma}
F_{\rho\sigma}$. Expression (\ref{emt}) for the energy momentum tensor
$\theta^{\mu\nu}=\frac{\delta S}{\delta g_{\mu\nu}}$ 
is then straightforward. 

Now an action in Minkowski spacetime that has no derivatives of the
field strength $F$, by Lorentz invariance depends on $F$ only via the
(pseudo)scalars $F^2$ and $F\widetilde F$. We can then always
minimally couple  the action to gravity  so that the metric enters
only in (\ref{albe}), and hence so that (\ref{emt}) holds. Even if the
coupling to gravity (for example in order to preserve symmetry properties)
requires terms like $RF^2$ where $R$ is the scalar curvature,
expression (\ref{emt}) still holds in flat spacetime.

\sk
From (\ref{emt}) it follows that the trace of the energy momentum
tensor satisfies
\eq
\frac{1}{4}\theta^\mu_{~~\mu}={\cal L}-\frac{1}{4}\widetilde G F~.
\en
We therefore  have
\eq\label{D5}
\frac{1}{4}\int \!d^4x \,g_{\mu\nu}\frac{\delta S}{\delta
  g_{\mu\nu}}=\int\!d^4x \,\frac{1}{4}\theta^\mu_{~~\mu}=S-\frac{1}{4}\int\!d^4x\,\widetilde G F=
-\la\frac{\partial S}{\partial \la}~,
\en
the last relation follows
observing that the inverse metric $g^{\mu\nu}$ appears always with the
factor $\la^{1/2}$ in the action  $S[F]=\int \!d^4x \,{\cal L}=\frac{1}{\la}\int
\!d^4x \sqrt{g}\, L$ (cf. (\ref{albe})).

Finally if we let $S[F]\to S_c[F]=\frac{1}{c^2}S[cF]$, we see that  (\ref{D5}) coincides with 
 (\ref{Ttrace}). Indeed  $\la\frac{\partial}{\partial \la}$ equals
 $c^2\frac{\partial}{\partial c^2}$ because $S_c[F]$ 
 depends only on the product $c^2\la$.

\end{appendix}

%%%%%%%%%%%%%%%%%%%%%%%%%%%%%%%%%%%%%%%%%%%%%%%%%%%%%%%%%%%%%%%%%%%%%%
%%%%%%%%%%%%%%%%%%%%%%%%%%%%%%%%%%%%%%%%%%%%%%%%%%%%%%%%%%%%%%%%%%%%%%

\end{document}